\theoremstyle{plain}
\newtheorem{theorem}{Theorem}[section]
\newtheorem{lemma}[theorem]{Lemma}
\newtheorem{proposition}[theorem]{Proposition}
\newtheorem{corollary}[theorem]{Corollary}
\theoremstyle{definition}
\newtheorem{definition}[theorem]{Definition}
\theoremstyle{remark}
\newtheorem*{remark}{Remark}
\tikzset{node distance=2.8cm, auto, semithick}
\tikzstyle{every state}=[fill=black!15, node distance=3.1cm]
\def\phi{\varphi}
\def\NAT{\mathbb{N}}
\def\td{\mathit{td}}
\def\edd{\mathop{ed}_d}
\def\level{\mathit{level}}
\title{Graph Isomorphism Parameterized by Elimination Distance to
  Bounded Degree\thanks{Research supported in part by EPSRC grant
    EP/H026835, DAAD grant A/13/05456, and DFG project \textit{Logik,
      Struktur und das Graphenisomorphieproblem}.}}
\author{Jannis Bulian}
\author{Anuj Dawar}
\affil{University of Cambridge Computer Laboratory}
\begin{document}

\maketitle

\begin{abstract}
A commonly studied means of parameterizing graph problems is the
deletion distance from triviality \cite{guo_structural_2004}, which counts
vertices that need to be deleted from a graph to place it in some
class for which efficient algorithms are known.
In the context of graph isomorphism, we define triviality to mean a graph with
maximum degree bounded by a constant, as such graph classes admit
polynomial-time isomorphism tests.  We generalise deletion distance to a measure
we call elimination distance to triviality, based on elimination trees
or tree-depth decompositions.  We establish that graph canonisation,
and thus graph isomorphism, is $\FPT$ when
parameterized by elimination distance to bounded degree, extending
results of Bouland et al.~\cite{bouland_tractable_2012}.
\end{abstract}

\section{Introduction}

The \emph{graph isomorphism problem} ($\GI$) is the problem of
determining, given a pair of graphs $G$ and $H$, whether they are
isomorphic.  This problem has an unusual status in complexity theory
as it is neither known to be in $\P$ nor known to be $\NP$-complete,
one of the few natural problems for which this is the case.
Polynomial-time algorithms are known for a variety of special classes
of graphs.  Many of these lead to natural parameterizations of $\GI$
by means of structural parameters of the graphs which can be used to
study the problem from the point of view of parameterized complexity.  For instance, it is
known that $\GI$ is in $\XP$ parameterized by the genus of the graph,
\cite{Miller:1980iq,Filotti:1980eg}, by maximum degree
\cite{luks_isomorphism_1982,babai_canonical_1983} and by the size of
the smallest excluded minor \cite{Ponomarenko:1991bm}, or more
generally, the smallest excluded topological minor
\cite{GroheMarx2012}.  For each of these parameters, it remains an
open question whether the problem is $\FPT$.  On the other hand, $\GI$
has been shown to be $\FPT$ when parameterized by eigenvalue
multiplicity \cite{Evdokimov:hy}, tree distance width
\cite{Yamazaki:dt}, the maximum size of a simplical component
\cite{Toda:2006bb,Uehara:2005cx} and minimum feedback vertex
set~\cite{Kratsch:ke}.  Bouland et al.~\cite{bouland_tractable_2012}
showed that the problem is $\FPT$ when parameterized by the tree depth
of a graph and extended this result to a parameter they termed
\emph{generalised tree depth}.
 In a recent advance on this, Lokshtanov et
al.~\cite{LokshtanovPPS14} have announced that graph isomorphism is also $\FPT$
parameterized by \emph{tree width}.

Our main result extends the results of Bouland et al.\ and is
incomparable with that of Lokshtanov et al.  We show that graph
canonisation is $\FPT$ parameterized by \emph{elimination distance to
degree $d$}, for any constant $d$.  The structural graph parameter we
introduce is an instance of what Guo et
al.~\cite{guo_structural_2004} call \emph{distance to triviality} and
it may be of interest in the context of other graph problems.

To put this parameter in context, consider the simplest notion of
distance to triviality for a graph $G$: the number $k$ of vertices of
$G$ that must be deleted to obtain a graph with no edges.  This is, of
course, just the size of a minimal vertex cover in $G$ and is a
parameter that has been much studied (see for
instance~\cite{FellowsLMRS08}).  Indeed, it is also quite
straightforward to see that $\GI$ is $\FPT$ when
parameterized by vertex cover number.  Consider two ways this
observation might be strengthened.  The first is to relax the notion
of what we consider to be ``trivial''.  For instance, as there is, for
each $d$, a polynomial time algorithm deciding $\GI$ among
graphs with maximum degree $d$, we may take this as our trivial base
case.  We then parameterize $G$ by the number $k$ of vertices
that must be deleted to obtain a subgraph of $G$ with maximum degree
$d$.  This yields the parameter \emph{deletion distance to bounded
degree}, which we consider in Section~\ref{S:deletion_distance} below.
Alternatively, we relax the notion of ``distance'' so that rather than
considering the sequential deletion of $k$ vertices, we consider the
recursive deletion of vertices in a tree-like fashion.  To be precise,
say that a graph $G$ has \emph{elimination distance} $k+1$ from
triviality if, in each connected component of $G$ we can delete a
vertex so that the resulting graph has distance $k$ to triviality.  If
triviality is understood to mean the empty graph, this just yields a
definition of the tree depth of $G$.  In our main result, we combine
these two approaches by parameterizing $G$ by the elimination distance
to triviality, where a graph is trivial if it has maximum degree $d$.
We show that, for any fixed $d$, this gives a structural parameter on
graphs for which graph canonisation is $\FPT$.  Along the way, we
establish a number of characterisations of the parameter that may be
interesting in themselves.  The key idea in the proof is the
separation, in a canonical way. of any graph of elimination distance $k$ to degree $d$ into
two subgraphs, one of which has degree bounded by $d$ and the other
tree-depth bounded by a function of $k$ and $d$.  It should be noted
that the parameter termed \emph{generalised tree depth}
in~\cite{bouland_tractable_2012} can be seen as a special case of
elimination distance to degree 2.

A central technique used in the proof is to construct, from a graph
$G$, a term (or equivalently a labelled, ordered tree) $T_G$ that is
an isomorphism invariant of the graph $G$.  It should be noted that
this general method is widely deployed in practical isomorphism tests
such as McKay's graph isomorphism testing program ``nauty''
\cite{McKay:1981ug,McKay:2014uw}.  The recent advance by Lokshtanov et
al.~\cite{LokshtanovPPS14} is also based on such an approach.

In Section~\ref{S:prelim} we recall some definitions from graph theory
and parameterized complexity theory.
Section~\ref{S:deletion_distance} introduces the notion of deletion
distance to bounded degree and presents a kernelisation procedure that
allows us to decide isomorphism.  In
Section~\ref{S:elimination_distance} we introduce the main parameter
of our paper, elimination distance to bounded degree, and establish
its key properties.  The main result on $\FPT$ graph canonisation is
established in Section~\ref{S:elimination_distance_alg}.

\section{Preliminaries}
\label{S:prelim}

Parameterized complexity theory is a two-dimensional approach to the study of the complexity of computational problems. A \emph{language} (or
\emph{problem}) $L$ is a set of strings $L \subseteq \Sigma^*$ over a
finite alphabet $\Sigma$. A \emph{parameterization} is a function $\kappa :
\Sigma^* \to \mathbb{N}$. We say that $L$ is \emph{fixed-parameter
  tractable} with respect to $\kappa$ if we can decide whether an
input $x \in \Sigma^*$ is in $L$ in time $O(f(\kappa(x)) \cdot |x|^c)$, where $c$
is a constant and $f$ is some computable function.
For a thorough discussion of the subject we refer to the books by
Downey and Fellows~\cite{Downey:2012vk}, Flum and
Grohe~\cite{Flum:2006vj} and Niedermeier~\cite{Niedermeier:2006ei}.

A \emph{graph} $G$ is a set of vertices $V(G)$ and a set of edges $E(G)
\subseteq V(G) \times V(G)$.
We will usually assume that graphs are loop-free and undirected,
i.e. that $E$ is irreflexive and symmetric.
If $E$ is not symmetric, we call $G$ a \emph{directed graph}. We
mostly follow the notation in Diestel~\cite{Diestel:2000vm}.

If $v \in G$ and $S \subseteq V(G)$, we write $E_G(v, S)$ for the set of
edges $\{vw \mid w \in S\}$ between $v$ and $S$.

The \emph{neighbourhood} of a vertex $v$ is $N_G(v) := \{w \in V(G) \mid vw
\in E(G)\}$. The \emph{degree} of a vertex $v$ is the size of its
neighbourhood $\deg_G(v) := |N_G(v)|$.
For a set of vertices $S \subseteq V(G)$ its neighbourhood is defined
to be $N_G(S) := \bigcup_{v \in S} N_G(v)$.
The \emph{degree} of a graph $G$ is the maximum degree of its
vertices $\Delta(G) := \max\{\deg_G(v) \mid v \in V(G)\}$. If it is
clear from the context what the graph is, we will sometimes omit the subscript.

A subgraph $H$ of $G$ is a graph with vertices $V(H) \subseteq V(G)$
and edges $E(H) \subseteq (V(H) \times V(H)) \cap E(G)$.
If $A \subseteq V(G)$ is a set of vertices of $G$, we write $G[A]$ for
the subgraph \emph{induced} by $A$, i.e. $V(G[A]) = A$ and $E(G[A]) = E(G)
\cap (A \times A)$. If $A$ is a subset of $V(G)$, we write $G
\setminus A$ for $G[V(G) \setminus A]$. For a vertex $v \in V(G)$,
we write $G \setminus v$ for $G \setminus \{v\}$.

A vertex $v$ is said to be \emph{reachable} from a vertex $w$ in $G$
if $v=w$ or if there is a sequence of edges $a_1a_2, \dots,
a_{s-1}a_s\in E(V)$ with the $a_i$ pairwise distinct and $w=a_1$ and $v=a_s$.  We call the subgraph $P$ of $G$ with
vertices $V(P) = \{a_1, \dots, a_s\}$ and edges
$E(P) = \{a_1a_2, \dots, a_{s-1}a_s\}$ a \emph{path from
$w$ to $v$}.

Let $H$ be a subgraph of $G$ and $v, w \in V(G)$. A \emph{path through
  $H$ from $w$ to $v$} is a path $P$ from $w$ to $v$ in 
$G$ with all vertices, except possibly the endpoints, in $V(H)$,
i.e. $(V(P) \setminus \{v, w\}) \subseteq V(H)$.

It is easy to see that for undirected graphs reachability
defines an equivalence relation on the vertices of $G$. A subgraph
of an undirected graph induced by a reachability class is called a
\emph{component}.

Two graphs $G$, $G'$ are \emph{isomorphic} if there is a bijection
$\phi : V(G) \to V(G')$ such that for all $v, w \in V(G)$ we have that
$vw \in E(G)$ if and only if $\phi(v)\phi(w) \in E(G')$. We write $G
\cong G'$ if $G$ and $G'$ are isomorphic.  We write $\GI$ to denote
the problem of deciding, given $G$ and $G'$ whether $G\cong G'$.

A \emph{(k-)colouring} of a graph $G$ is a map $c : V(G) \to \{1, \dots,
k\}$ for some $k \in \NAT$. We call a graph together with a colouring
a \emph{coloured} graph.
Two coloured graphs $G, G'$ with respective colourings $c : V(G) \to
\{1, \dots, k\}, c' : V(G') \to \{1, \dots, k\}$ are \emph{isomorphic} if there is a bijection
$\phi : V(G) \to V(G')$

such that:
\begin{itemize}
\item for all $v, w \in V(G)$ we have that $vw \in E(G)$ if and only
  if $\phi(v)\phi(w) \in E(G')$;
\item for all $v \in V(G)$, we have that $c(v) = c'(\phi(v))$.
\end{itemize}

Note that we require the colour classes to match exactly, and do not
allow a permutation of the colour classes.

Let $\C$ be a class of (coloured) graphs closed under isomorphism. A
\emph{canonical form for $\C$} is a function $F : \C \to \C$ such that
\begin{itemize}
\item for all $G \in \C$, we have that $F(G) \cong G$;
\item for all $G, H \in \C$, we have that $G \cong H$ if, and only if,
  $F(G) = F(H)$.
\end{itemize}

Recall that a \emph{partial order} is a binary relation $\leq$ on a set $S$ which is
reflexive, antisymmetric and transitive. If $\leq$ is a partial order
on $S$, and for each element $a \in S$, the set $\{b \in S \mid b \leq
a\}$ is totally ordered by $\leq$, we say $\leq$ is a \emph{tree
  order}. (Note that the covering relation of a tree order is not
necessarily a tree, but may be a forest.)

\begin{definition}\label{def:elimination-order}
An \emph{elimination order} $\leq$ is a tree order on the vertices of
a graph $G$, such that for each edge $uv \in E(G)$ we have either $u
\leq v$ or $v \leq u$.
\end{definition}
We say that an order has \emph{height} $k$ if the length of the
longest chain in it is $k$.

We write $\td{G}$ for the \emph{tree-depth} of $G$, 
which is defined as follows
\[
\td(G) := \begin{cases}
 0,
 & \text{if }V(G) = \emptyset; \\
 1 + \min\{\td(G \setminus v) \mid v \in V(G)\},
 & \text{if $G$ is connected;} \\
 \max\{\td(H) \mid H \text{ a component of $G$}\},
 & \text{otherwise.}
\end{cases}
\]
Note that there is an elimination order $\leq$ of height $k$ for a
graph $G$ if, and only if, $\td(G) \leq k$.
\section{Isomorphism on bounded-degree graphs}

In this section we collect some well known results about isomorphism
tests and canonisation of bounded degree graphs that we will use.
Luks~\cite{luks_isomorphism_1982} shows that
isomorphism of bounded-degree graphs is decidable in polynomial time.
This result extends, by an easy reduction, to \emph{coloured} graphs
of bounded-degree.  For completeness, we present this reduction
explicitly. 

\begin{proposition}
The isomorphism problem for coloured graphs can be reduced to $\GI$ in
polynomial time. 
\end{proposition}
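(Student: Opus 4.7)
The plan is to encode each colour by attaching to every vertex a pendant gadget whose size uniquely identifies that colour. Given a coloured graph $(G,c)$ with $c:V(G)\to\{1,\dots,k\}$ and $n=|V(G)|$, I would construct an uncoloured graph $G'$ by adding, for every vertex $v$ with $c(v)=i$, a set of $n+i+1$ fresh pendant vertices, each joined to $v$ by a single edge and otherwise non-adjacent to anything; performing the same construction on a second coloured graph $(H,c')$ yields $H'$. Both $G'$ and $H'$ are computable in time polynomial in the input size, and remain simple and undirected.

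For correctness I would first observe that the partition of $V(G')$ into the original vertices $V(G)$ and the added pendants is canonical: every added pendant has degree exactly $1$ in $G'$, while every original vertex $v$ acquires at least $n+2$ pendant neighbours and so has degree at least $n+2>1$. Hence any isomorphism $\phi:G'\to H'$ must map originals to originals and pendants to pendants. Restricted to the original vertex set, $\phi$ becomes a graph isomorphism $G\to H$, since the non-pendant edges of $G'$ are precisely the edges of $G$. Moreover, the number of pendant neighbours of $v$ in $G'$ equals $n+c(v)+1$, which determines $c(v)$ uniquely, so $\phi$ forces $c(v)=c'(\phi(v))$ and is therefore colour-preserving. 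Conversely, any colour-preserving isomorphism $G\to H$ trivially extends to an isomorphism $G'\to H'$ by bijecting the pendants at $v$ with the pendants at $\phi(v)$ in any way.

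I do not expect a substantial obstacle here; the only point to watch is that the gadget sizes must be large enough and colour-specific enough to simultaneously recover (i) the original vertices from $G'$ by a degree criterion and (ii) the colour at each original vertex from the pendant count. Using $n+c(v)+1$ pendants achieves both properties with room to spare.
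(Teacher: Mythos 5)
Your proof is correct as a proof of the literal statement, but it takes a genuinely different route from the paper's and, crucially, loses a property that the paper needs downstream. You encode the colour $c(v)$ by attaching a large pendant star of size $n+c(v)+1$ to $v$; this makes original vertices recoverable by a degree threshold and colours recoverable by pendant count, and the argument is cleaner than the paper's since you avoid any case split. The paper instead attaches to $v$ a \emph{small cycle} of length $c(v)+2$, which requires a more delicate argument (non-isolated originals are distinguished by having degree $\geq 3$ versus the new cycle vertices of degree $2$, while isolated originals require a separate argument identifying the unique original vertex on the image cycle). What the paper's fiddlier gadget buys is the key observation in the remark immediately after the proof: the cycle gadget increases each vertex's degree by exactly $2$, so a coloured graph of maximum degree $d$ maps to an uncoloured graph of maximum degree $d+2$. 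Your construction increases the degree of every original vertex by at least $n+2$, so it does \emph{not} preserve bounded degree. Since this proposition is invoked precisely to lift Luks's and Babai--Luks's bounded-degree isomorphism and canonisation results to \emph{coloured} bounded-degree graphs (Theorems~\ref{T:bdd_iso} and~\ref{T:bdd_canon}), your reduction would not work in that role: applying it to a degree-$d$ coloured graph produces a graph of degree $\Theta(n)$, to which the bounded-degree algorithms no longer apply. So while your argument establishes the proposition as stated, you should be aware that it cannot be substituted into the paper without breaking the chain of reasoning that follows.
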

\begin{proof}
Let $G, G'$ be graphs and let $c, c' : V(G) \to \{1, \dots, k\}$ be
colourings of $G, G'$ respectively for some $k \in \NAT$.

We define $H$ to be the graph whose vertices include $V(G)$ and,
additionally, for each $v \in V(G)$, $c(v)+1$ new vertices
$u^v_1,\ldots,u^v_{c(v)+1}$.  The edges of $H$ are the edges $E(G)$
plus additional edges so that the vertices $v$ and
$u^v_1,\ldots,u^v_{c(v)+1}$ form a simple cycle of length $c(v)+2$.
We obtain $H'$ in a similar way from $G'$.

We claim that $G \cong G'$ if, and only if, $H \cong H'$.
Clearly, if $G \cong G'$ and $\phi$ is an isomorphism witnessing this,
it can be extended to an isomorphism from $H$ to $H'$ by mapping
$u^v_i$ to $u^{\phi(v)}_i$.
For the converse, suppose $H \cong H'$ and let $\phi : H \to H'$ be
an isomorphism.  We use it to define an isomorphism $\phi'$ from $G$
to $G'$.   Note that, if $v \in V(G)$ is not an isolated vertex
of $G$, then it has degree at least 3 in $H$.  Since $\phi(v)$ has the
same degree, it is in $V(G')$, and we let $\phi'(v) = \phi(v)$.  If
$v$ is an isolated vertex of $G$, then its component in $H$ is a
simple cycle of length $c(v)+2$.  The image of this component under
$\phi$ is a simple cycle of $H'$ which must contain exactly one vertex
$v'$ of $V(G')$.  We let $\phi'(v)=v'$.  It is easy to see that there
is an edge between $v_1,v_2$ in $G$ if, and only if, there is an edge
between $\phi'(v_1)$ and $\phi'(v_2)$ in $G'$.  To see that $\phi'$
also preserves colours, note that $\phi$ must map the cycle containing
$u^v_{c(v)}$ to the cycle containing $u^{\phi'(v)}_{c(\phi'(v))}$ and
therefore $c(v) = c(\phi'(v))$. 
\end{proof}
\begin{remark}
Note that the construction in the proof increases the degree of each
vertex by $2$, so if $G$ and $G'$ are graphs of degree $d$, then $H,
H'$ are graphs of degree $d+2$.
\end{remark}

As Luks~\cite{luks_isomorphism_1982} proves that isomorphism of bounded
degree graphs can be decided in polynomial time, we have the
following:
\begin{theorem} \label{T:bdd_iso}
We can test in polynomial time whether two (coloured) graphs with
maximal degree bounded by a constant are isomorphic.
\end{theorem}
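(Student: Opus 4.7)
The plan is to combine the previous proposition with Luks's theorem in a black-box manner. Luks~\cite{luks_isomorphism_1982} gives, for every constant $d$, a polynomial-time algorithm deciding isomorphism of two uncoloured graphs of maximum degree at most $d$, so it suffices to reduce the coloured version to this case without blowing up the degree by more than a constant.

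For this I would directly invoke the reduction from the preceding proposition: given coloured graphs $G$ and $G'$ of maximum degree $d$ with colourings into $\{1,\dots,k\}$, form the graphs $H$ and $H'$ by attaching, to each vertex $v$, a disjoint cycle-gadget of length $c(v)+2$ through $v$. As the remark immediately before the theorem notes, each vertex of $H$ has degree exactly $\deg(v)+2$, so $\Delta(H),\Delta(H') \le d+2$. The proposition already establishes that $G\cong G'$ iff $H\cong H'$ as uncoloured graphs, so running Luks's algorithm on $(H,H')$ decides coloured isomorphism of $(G,G')$.

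It remains only to check that the construction is polynomial. The number of new vertices added per original vertex is $c(v)+1 \le k+1$, and since we may assume $k\le|V(G)|$ (colours that are never used can be dropped after relabelling), the sizes of $H$ and $H'$ are polynomial in $|V(G)|+|V(G')|$, and the gadgets are constructible in linear time. Since $d+2$ is still a constant whenever $d$ is, Luks's algorithm runs in polynomial time on $(H,H')$, giving the desired polynomial-time test.

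The only mild subtlety, and the single place any real argument is needed, is confirming that the reduction preserves the bounded-degree hypothesis by a constant additive term; this is exactly the content of the remark, so no further work is required beyond citing Luks and the previous proposition.
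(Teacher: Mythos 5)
Your argument is correct and is exactly the paper's approach: the paper derives this theorem directly by combining Luks's polynomial-time isomorphism test for bounded-degree graphs with the preceding proposition's reduction from coloured to uncoloured isomorphism (with the degree-$(d{+}2)$ observation in the remark). Your additional bookkeeping on the size of the gadgets and the assumption $k\le|V(G)|$ is a reasonable, correct elaboration of what the paper leaves implicit.
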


Babai and Luks~\cite{babai_canonical_1983} give a
polynomial time canonisation algorithm for bounded degree graphs.
Just as above we can reduce canonisation of coloured
bounded degree graphs to the bounded degree graph canonisation
problem.

\begin{theorem}\label{T:bdd_canon}
Let $\C$ be a class of (coloured) bounded degree graphs closed under
isomorphism. Then there is a canonical form $F$ for $\C$ that allows
us to compute $F(G)$ in polynomial time.
\end{theorem}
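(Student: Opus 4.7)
The plan is to extend the coloured-to-uncoloured reduction of the previous proposition to the canonisation setting and then to invoke the Babai--Luks canonisation algorithm \cite{babai_canonical_1983} for uncoloured bounded-degree graphs. Given $(G,c)\in\C$, I would build an uncoloured bounded-degree graph $H=H(G,c)$ that encodes $(G,c)$ structurally, compute its canonical form $F_0(H)$ with Babai--Luks, and read off a canonical coloured graph $F(G,c)$ from $F_0(H)$.

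For the construction of $H$, the gadget of the previous proposition --- attaching at each $v\in V(G)$ a simple cycle of length $c(v)+2$ --- already gives $H\cong H'$ iff $(G,c)\cong (G',c')$ and, by the remark, preserves boundedness of the degree. For canonisation I additionally need the set $V(G)\subseteq V(H)$ and the colour $c(v)$ of each $v\in V(G)$ to be readable from $F_0(H)$ by a purely structural rule. The only obstruction is the possibility of isolated vertices of $G$, whose component in $H$ is a vertex-transitive cycle. I would remove this obstruction by attaching to each $v\in V(G)$ an additional small asymmetric marker, e.g.\ a single pendant edge, so that every $v\in V(G)$ has degree at least $3$ in $H$ while every auxiliary vertex has degree at most $2$. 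The modified $H$ still has maximum degree $\Delta(G)+O(1)$, so the family $\{H(G,c)\mid (G,c)\in\C\}$ is a class of uncoloured bounded-degree graphs to which Babai--Luks applies.

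From $F_0(H)$ I would then extract $F(G,c)$ in three polynomial-time steps: identify $V(G)$ as the set of vertices of $F_0(H)$ of degree at least $3$; take the induced subgraph on this set; and assign to each such $v$ the colour recovered by subtracting $2$ from the length of the unique cycle through $v$ whose other vertices all have degree exactly $2$. Each step is polynomial, and by the argument of the previous proposition the resulting coloured graph is isomorphic to $(G,c)$.

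The main obstacle, although it is essentially immediate, is verifying that $F$ is truly canonical, i.e.\ that $(G,c)\cong (G',c')$ entails the literal equality $F(G,c)=F(G',c')$ and not merely isomorphism of the two outputs. For this I would combine the canonicity of $F_0$, which already forces $F_0(H(G,c))=F_0(H(G',c'))$, with the observation that the two extraction rules --- ``vertex of degree at least $3$'' and ``length of the cycle through $v$ whose other vertices all have degree $2$'' --- are defined in purely structural terms and hence yield identical outputs when applied to equal graphs.
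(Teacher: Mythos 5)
Your proof is correct and takes essentially the same route the paper intends: reduce coloured canonisation of bounded-degree graphs to uncoloured bounded-degree canonisation via a gadget construction, apply the Babai--Luks algorithm, and read off the answer. The paper only gestures at this reduction (``just as above we can reduce\ldots''), whereas you spell out the details; in particular, your pendant-edge addition to the cycle gadget cleanly handles isolated vertices of $G$, whose components in $H$ would otherwise be vertex-transitive cycles offering no purely structural way to recover the distinguished vertex from the canonical form.
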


\section{Deletion distance to bounded degree}
\label{S:deletion_distance}

We first study the notion of deletion distance to bounded degree and
establish in this section that graph isomorphism is FPT with this
parameter.  Though the result in this section is subsumed by the more
general one in Section~\ref{S:elimination_distance_alg}, it provides a
useful warm-up and a tighter, polynomial kernel.  In the present
warm-up we only give an algorithm for the graph isomorphism problem,
though the result easily holds for canonisation as well (and this
follows from the more general result in
Section~\ref{S:elimination_distance_alg}).  The notion of deletion
distance to bounded degree is a particular instance of the general
notion of distance to triviality introduced by Guo et
al.~\cite{guo_structural_2004}.  In the context of graph isomorphism, we have chosen triviality to mean graphs of bounded
degree.

\begin{definition}
A graph $G$  has \emph{deletion distance $k$ to degree $d$} if
there are $k$ vertices $v_1, \dots, v_k \in V(G)$ such that $G
\setminus \{v_1, \dots, v_k\}$ has degree $d$. We call the
set $\{v_1, \dots, v_k\}$ a \emph{$d$-deletion set}.
\end{definition}
\begin{remark}
 To say that $G$ has deletion distance $0$ from degree
$d$ is just to say that $G$ has maximum degree $d$.  Also note that
if $d=0$, then the $d$-deletion set is just a vertex cover and the minimum
deletion distance the vertex cover number of $G$.
\end{remark}

We show that isomorphism is fixed-parameter
tractable on such graphs parameterized by $k$ with fixed degree $d$;
in particular we give a procedure that computes a  polynomial kernel
for the deletion set in linear time.

\begin{theorem} \label{T:deldistance_kernel}
For any graph G and integers $d,k > 0$, we can identify in linear time a
subgraph $G'$ of $G$, a set of vertices $U \subseteq V(G')$ with
$|U| = O(k(k+d)^2)$ and a $k' \leq k$ such that: $G$ has deletion distance $k$
to degree $d$ if and only if $G'$ has deletion distance $k'$ to $d$ and,
moreover, if $G'$ has deletion distance at most $k'$, then any minimum size
$d$-deletion set for $G'$ is contained in $U$.
\end{theorem}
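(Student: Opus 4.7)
The strategy is a two-stage kernelisation: first force the very-high-degree vertices into the deletion set, then bound a small candidate set for the rest. First I compute $V_H := \{v \in V(G) : \deg_G(v) > k+d\}$ in linear time. If $S$ is any $d$-deletion set of size $\le k$ and $v \in V_H \setminus S$, then $\deg_{G \setminus S}(v) \ge \deg_G(v) - |S| > d$, a contradiction, so $V_H \subseteq S$. Hence if $|V_H| > k$ no such $S$ exists and we may output $(G',k',U) := (G,k,\emptyset)$, making both sides of the claimed equivalence false and the conclusion on minimum deletion sets vacuous. Otherwise set $G' := G \setminus V_H$ and $k' := k - |V_H| \le k$; the equivalence then follows because $d$-deletion sets for $G$ of size $\le k$ are exactly sets of the form $V_H \cup S'$ with $S'$ a $d$-deletion set for $G'$ of size $\le k'$.

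In $G'$ every vertex has degree at most $k+d$. Let $V_{>d} := \{v \in V(G') : \deg_{G'}(v) > d\}$ and define $U := V_{>d} \cup N_{G'}(V_{>d})$. The key lemma is that any minimum $d$-deletion set $S'$ for $G'$ lies in $U$. For $u \in S'$, by minimality there exists $v \notin S' \setminus \{u\}$ with $\deg_{G' \setminus (S' \setminus \{u\})}(v) > d$. If $v = u$ then $\deg_{G'}(u) > d$, so $u \in V_{>d}$. Otherwise $v \notin S'$, and the residual degree of $v$ increases by exactly $|N_{G'}(v) \cap \{u\}|$ when $u$ is removed from the deletion set, which must therefore equal $1$; hence $u \in N_{G'}(v)$ and $\deg_{G'}(v) \ge \deg_{G' \setminus S'}(v) + 1 \ge d+1$, placing $v \in V_{>d}$ and $u \in N_{G'}(V_{>d})$.

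For the size bound I would apply a double counting of the excess, $\sum_{v \in V_{>d} \setminus S'}(\deg_{G'}(v) - d) \le |S'| \cdot \Delta(G') \le k(k+d)$, since each deleted vertex reduces the residual degree of at most $\Delta(G') \le k+d$ other vertices. Each summand is at least $1$, so $|V_{>d}| \le |S'| + k(k+d) \le k(k+d+1)$ whenever a deletion set of size $\le k'$ exists; hence $|U| \le (1 + \Delta(G'))|V_{>d}| \le k(k+d+1)^2 = O(k(k+d)^2)$. If the observed $|V_{>d}|$ exceeds this threshold we conclude no such deletion set exists and output a trivial kernel. All steps---computing degrees, $V_H$, the induced subgraph $G'$, $V_{>d}$, and the neighbourhood---run in linear time.

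The main obstacle is the containment lemma: precisely identifying why a $G'$-low-degree vertex could belong to a minimum deletion set, and verifying that the witnessing neighbour must have degree strictly greater than $d$ in $G'$ (so that $V_{>d}$, not merely $V_{\ge d}$, suffices as the ``hub'' set). The argument relies essentially on $S'$ being minimum; the analogous statement for arbitrary small deletion sets would fail.
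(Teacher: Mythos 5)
Your proposal is correct and follows essentially the same two-stage kernelisation as the paper: first force $\{v : \deg_G(v) > k+d\}$ into any small deletion set (rejecting if it exceeds $k$), then take $U = V_{>d} \cup N_{G'}(V_{>d})$ and use minimality of the deletion set to show containment. The only cosmetic difference is in the size analysis -- the paper bounds $|U|$ via $V_{>d} \subseteq R' \cup N(R')$ and hence $U \subseteq R' \cup N(R') \cup N(N(R'))$, whereas you use a double-counting of degree excess -- but both hinge on the same observation that $\Delta(G') \le k+d$ and yield the same $O(k(k+d)^2)$ bound.
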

\begin{proof}
  Let $H := \{v \in V(G) \mid \deg(v) > k + d\}$.  Now, if $R$ is a 
  minimum size $d$-deletion set for $G$ and $G$ has deletion distance at
  most $k$ to degree $d$, then $|R| \leq k$ and the
  vertices in $V(G \setminus R)$ have degree at most $k + d$ in $G$.  So
  $H \subseteq R$.
  This means that if $|H| > k$,  then $G$ must have deletion distance greater than
  $k$ to  degree $d$ and in that case we let $G' := G, k' := k$ and $U
  = \emptyset$.

  Otherwise let $G' := G \setminus H$ and $k' := k - |H|$. We have
  shown that every $d$-deletion set of size at most $k$ must contain
  $H$. Thus $G$ has deletion distance $k$ to degree $d$ if
  and only if $G'$ has deletion distance $k'$ to degree $d$.

  Let $S := \{v \in V(G') \mid \deg_{G'}(v) > d\}$ and $U := S \cup
  N_{G'}(S)$.  Let $R' \subseteq V(G')$ be a minimum size $d$-deletion
  set for $G'$. We show that $R' \subseteq U$. Let $v \not\in U$. Then
  by the definition of $U$ we know that $\deg_{G'}(v) \leq d$ and all
  of the neighbours of $v$ have degree at most $d$ in $G'$. So if $v
  \in R'$, then $G \setminus (R' \setminus \{v\})$ also has maximal
  degree $d$, which contradicts the assumption that $R'$ is of minimum
  size. Thus $v \not\in R'$.

  Note that the vertices in $G' \setminus (R' \cup N(R'))$ have the
  same degree in $G'$ as in $G$ and thus all have degree
  at most $d$. So $S \subseteq R' \cup N(R')$ and thus $|U|
  \leq k' + k'(k+d) + k'(k+d)^2 = O(k(k+d)^2)$.

  Finally, the sets $H$ and $U$ defined as above can be found in linear time, and
  $G', k'$ can be computed from $H$ in linear time.
\end{proof}

\begin{remark}
Note that if $U = \emptyset$ and $k' > 0$, then there are no
$d$-deletion sets of  size at most $k'$.
\end{remark}

Next we see how the kernel $U$ can be used to determine whether two
graphs with deletion distance $k$ to degree $d$ are isomorphic by
reducing the problem to isomorphism of coloured graphs of degree
at most $d$.

Suppose we are given two graphs $G$ and $H$ with $d$-deletion sets $S
= \{v_1, \dots, v_k\}$ and $T = \{w_1, \dots, w_k\}$ respectively.
Further suppose that the map  $v_i \mapsto w_i$ is an isomorphism on
the induced subgraphs $G[S]$ and $H[T]$.  We can then test if this map
can be extended to an isomorphism from $G$ to $H$ using
Theorem~\ref{T:bdd_canon}.  To be precise, we define the coloured graphs
$G'$ and $H'$ which are obtained from $G\setminus S$ and $H\setminus
T$ respectively, by colouring vertices.  A vertex $u \in V(G')$ gets
the colour  $\{i \mid v_i \in N_G(u)\}$, i.e.\ the set of indices of its
neighbours in $S$.  Vertices in $H'$ are similarly coloured by the
sets of indices of their neighbours in $T$.  It is clear that $G'$ and $H'$ are
isomorphic if, and only if, there is an isomorphism between $G$ and
$H$, extending the fixed map between $S$ and $T$.  The coloured graphs
$G'$ and $H'$ have degree bounded by $d$, so Theorem~\ref{T:bdd_canon}
gives us a polynomial-time isomorphism test on these graphs.

Now, given a pair of graphs $G$ and $H$ which have deletion distance
$k$ to degree $d$, let $A$ and $B$ be the sets of vertices of degree
greater than $k+d$ in the two graphs respectively.  Also, let $U$ and
$V$ be the two kernels in the graphs obtained from
Theorem~\ref{T:deldistance_kernel}.  Thus, any $d$-deletion set in $G$
contains $A$ and is contained in $A \cup U$ and similarly, any
$d$-deletion set for $H$ contains $B$ and is contained in $B \cup V$.
Therefore to test $G$ and $H$ for isomorphism, it suffices to consider
all $k$-element subsets $S$ of $A \cup U$ containing $A$ and all
$k$-element subsets $T$ of $B\cup V$ containing $B$, and if they are
$d$-deletion sets for $G$ and $H$, check for all $k!$ maps
between them whether the map can be extended to an isomorphism from
$G$ to $H$.  As $d$ is constant this takes time
$O^*\left({{k^3}\choose{k}}^2 \cdot k!\right)$, which is
$O^*\left(2^{7k\log k}\right)$.

\section{Elimination distance to bounded degree}
\label{S:elimination_distance}

In this section we introduce a new structural parameter for graphs. We
generalise the idea of deletion distance to triviality by
recursively allowing deletions from each component of the graph. This
generalises the idea of elimination height or tree-depth, and is
equivalent to it when the notion of triviality is the empty
graph. In the context of graph isomorphism and canonisation we again
define triviality to mean bounded degree, so we look at the
elimination distance to bounded degree.

\begin{definition}
  The \emph{elimination distance to degree $d$} of a graph $G$ is
  defined as follows:
  { \small
  \[
  \textstyle{\edd(G)} :=
  \begin{cases}
    0,
    & \text{if }\Delta(G) \leq d; \\
    1 + \min \{\edd(G \setminus v) \mid v \in V(G)\},
    & \text{if $\Delta(G) > d$ and $G$ is connected;} \\
    \max\{\edd(H) \mid H \text{ a connected component of $G$}\},
    & \text{otherwise.}
  \end{cases}
  \]
  }
\end{definition}

We first introduce other
equivalent characterisations of this parameter. If $G$ is a graph
that has elimination distance $k$ to degree $d$, then we can associate
a certain tree order $\leq$ with it:

\begin{definition} \label{D:elim_order_to_deg}
A tree order $\leq$ on $V(G)$ is an \emph{elimination order to degree $d$} for
$G$ if for each $v \in V(G)$ the set
\[ 
S_v := \{u \in V(G) \mid uv \in E(G) \text{ and } u \not\leq v \text{ and } v \not\leq u\}
\] 
satisfies either:
\begin{itemize}
\item $S_v = \emptyset$; or
\item $v$ is ${\leq}$-maximal, $|S_v| \leq d$, and for all $u \in
  S_v$, we have $\{w \mid w < u\} = \{w \mid w < v\}$.
\end{itemize}
\end{definition}
\begin{remark}
Note that if $S_v = \emptyset$ for all $v \in V(G)$, then an
elimination order to degree $d$ is just an elimination order, in the
sense of Definition~\ref{def:elimination-order}.
\end{remark}

\begin{proposition}\label{prop:tree-depth-order}
A graph $G$ has $\edd(G) \leq k$ if, and only if, there is an elimination
order to degree $d$ of height $k$ for $G$.
\end{proposition}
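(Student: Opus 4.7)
The plan is to prove both directions of the equivalence by induction on $k$, exploiting the parallel recursive structure of the definitions of $\edd$ and of a tree order.

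For the forward direction, suppose $\edd(G) \leq k$. In the base case $k=0$ we have $\Delta(G) \leq d$, and I take $\leq$ to be the antichain order on $V(G)$; every vertex $v$ is $\leq$-maximal with $S_v = N(v)$ of size at most $d$, all strict-predecessor sets $\{w \mid w < v\}$ are empty, and so the second clause of Definition~\ref{D:elim_order_to_deg} holds for every $v$. For the inductive step, if $G$ is connected with $\Delta(G) > d$, pick a vertex $v$ attaining $\edd(G) = 1 + \edd(G \setminus v)$, apply the induction hypothesis to $G \setminus v$ to obtain an elimination order $\leq'$ on $V(G) \setminus \{v\}$ of height at most $k-1$, and extend it to $\leq$ on $V(G)$ by declaring $v$ strictly below every other vertex. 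For every $u \neq v$, the newly added vertex $v$ is comparable to $u$, so it cannot appear in $S_u$; hence the sets $S_u$ are unchanged from $\leq'$, and $S_v = \emptyset$ for the same reason. Chains grow in length by exactly one. If $G$ is disconnected, apply the induction hypothesis to each component and combine the resulting orders as a disjoint union with no comparabilities between components.

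For the converse, assume $\leq$ is an elimination order to degree $d$ on $V(G)$ of height at most $k$; I show $\edd(G) \leq k$ by induction on $k$, treating each connected component $C$ of $G$ separately. If $V(C)$ is an antichain under $\leq$, then every $v \in V(C)$ is $\leq$-maximal, and the second clause of Definition~\ref{D:elim_order_to_deg} forces $|N_C(v)| = |S_v| \leq d$, giving $\edd(C) = 0$. Otherwise I claim $\leq$ restricted to $V(C)$ has a unique $\leq$-minimum $r$: suppose two distinct $\leq$-minima $r_1, r_2$ both belong to $V(C)$, and take a shortest $C$-path between them. Some edge $uv$ along this path joins vertices in distinct $\leq$-subtrees, so $u$ and $v$ are $\leq$-incomparable and must form a packet; the condition $\{w \mid w < u\} = \{w \mid w < v\}$ then forces both predecessor sets to be empty, i.e.\ both $u$ and $v$ are $\leq$-minimal and $\leq$-maximal singletons. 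Iterating this observation along the path pushes all of $V(C)$ into an antichain, contradicting the assumption. Hence $r$ is unique, the restriction of $\leq$ to $V(C) \setminus \{r\}$ remains an elimination order to degree $d$ of height at most $k-1$ (every $S_v$ can only shrink when a vertex is removed), and the induction hypothesis gives $\edd(C \setminus r) \leq k-1$, so $\edd(C) \leq k$. Taking the maximum over components yields $\edd(G) \leq k$.

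The main obstacle I anticipate is the uniqueness-of-minimum step in the converse: one has to argue that the packet exception cannot quietly glue together two distinct $\leq$-subtrees of a connected component except in the degenerate case where every vertex of the component is a packet singleton. Once that structural fact is nailed down, the remaining verifications — that the antichain order witnesses $\edd=0$ in the base case of the forward direction, that extending by a new global minimum preserves the $S_u$ conditions, and that deleting the root of a component preserves them as well — are routine consequences of the definition of $S_v$ and of tree orders.
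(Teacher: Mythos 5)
Your proof is correct and follows essentially the same inductive route as the paper's: the antichain order in the base case, prepending a new global minimum in the forward direction, and peeling off the unique minimal element of each connected component in the converse. You actually go further than the paper in trying to justify the unique-minimum claim (which the paper simply asserts ``because $G$ is connected and $k>0$''); your idea is right, though for completeness one should note that $\{w \mid w<u\}=\{w\mid w<v\}$ being nonempty would force $u$ and $v$ to share a least predecessor (contradicting that they lie under distinct minima), and the resulting ``isolated-in-order'' property then spreads to all of $V(C)$ by connectivity, not merely along the chosen path.
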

\begin{proof}
Let $S_v$ be as in Definition~\ref{D:elim_order_to_deg}.
We prove the proposition by induction on $k$. If $k = 0$, then the graph has no vertex
of degree larger than $d$ and we define the elimination order $\leq$ to be the identity
relation on $V(G)$.
Then every $v \in V(G)$ is maximal, we have $|S_v| \leq d$, and for all
$u \in S_v$ we have $\{w \mid w < u\} = \emptyset = \{w \mid w < v\}$.

Suppose $k>0$ and the statement is true for smaller values. If $G$ is not
connected, we apply the following argument to each component. So in
the following we assume that $G$ is connected.

Suppose $\edd(G) = k$. Then there is a vertex $a \in V(G)$ such that
the components $C_1, \dots, C_r$ of $G \setminus a$ all have
$\edd(C_i) \leq k-1$. So by the induction hypothesis each $C_i$ has
a tree order $\leq_i$ to degree $d$ of height at most $k-1$ with the
properties in Definition~\ref{D:elim_order_to_deg}.
For each $v \in V(C_i)$ define 
\[ 
S_v^i := \{u \in V(C_i) \mid uv \in E(G) \text{ and } u \not\leq v \text{ and } v \not\leq u\}.  
\] 

Let
\[
{\leq} := \{(a, w) \mid w \in V(G)\} \cup \bigcup_i \leq_i. 
\]
Then $\leq$ is clearly a tree order for $G$. Note that $S_a =
\emptyset$. Let $v \in V(G) \setminus a$ be a vertex different
from $a$, say $v \in V(C_i)$. Note $S_v^i = S_v$.
If $S_v \neq \emptyset$, then $v$ is
$\leq_i$-maximal,  and thus also $\leq$-maximal. Moreover, $|S_v^i| =
|S_v| \leq d$. Lastly for any $u \in S_v$:
\[
\{w \mid w < u\} = \{a\} \cup \{w \mid w <_i u\} = \{a\} \cup \{w \mid w <_i v\} = \{w \mid w < v\}.
\]

Conversely assume there is an elimination order $\leq$ to degree $d$ of height
$k$ for $G$. There is a single
minimal element $v$ of $\leq$ because $G$ is
connected and $k > 0$.
Note that $\leq$ restricted to a component $C$ of $G \setminus
v$ has height $k-1$ and thus by the induction assumption we have that
$\edd(C) \leq k-1$.
\end{proof}

We can split a graph with an elimination order to degree $d$ in two
parts: one of low degree, and one with an elimination order defined on
it. So if $G$ is a graph that has elimination distance $k$ to degree
$d$, we can associate an elimination order $\leq$ for a subgraph $H$
of $G$ of height $k$ with $G$, so that each component of $G \setminus
V(H)$ has degree at most $d$ and is connected to $H$ along just one
branch (this is defined more formally below).

\begin{proposition} \label{P:elim_order_char2}
Let $G$ be a graph and $\leq$ an elimination order to degree
$d$ for $G$ of height $k$.   If $A$ is the set of vertices in $V(G)$
that are not $\leq$-maximal, then:
\begin{enumerate}
\item $\leq$ restricted to $A$ is an elimination order of height $k-1$
  of $G[A]$; and
\item $G \setminus A$ has degree at most $d$;
\item if $C$ is the vertex set of a component of $G \setminus A$, and
  $u, v \in A$ are $\leq$-incomparable, then either $E(u, C) = \emptyset$
  or $E(v, C) = \emptyset$.
\end{enumerate}
\end{proposition}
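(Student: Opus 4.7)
The plan is to verify each clause of the proposition by unpacking Definition~\ref{D:elim_order_to_deg} carefully and exploiting that $A$ is precisely the set of non-$\leq$-maximal vertices.

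For (1), first I would check the edge condition for an elimination order in the sense of Definition~\ref{def:elimination-order}. Take $uv \in E(G[A])$. Since $u, v \in A$, neither is $\leq$-maximal, so by the second bullet of Definition~\ref{D:elim_order_to_deg} neither $S_u$ nor $S_v$ is allowed to be nonempty; hence $S_u = S_v = \emptyset$. But $u \not\leq v$ and $v \not\leq u$ would put $v \in S_u$, so $u$ and $v$ must be comparable. For the height bound, I would argue that any chain $v_1 < \cdots < v_m$ in $A$ can be extended: since $v_m \in A$ is not maximal, some $w > v_m$ exists in $V(G)$, giving a chain of length $m+1$ in $\leq$; hence $m+1 \leq k$, so the restricted order has height at most $k-1$.

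For (2), I would fix a maximal vertex $v \in V(G) \setminus A$ and classify each neighbour $w$ of $v$ in $G \setminus A$. Since $w$ is also maximal, $v \leq w$ forces $v = w$, and $w \leq v$ forces $w = v$; both are impossible for a loop-free edge, so $v$ and $w$ are $\leq$-incomparable. Thus $w \in S_v$, giving $\deg_{G \setminus A}(v) \leq |S_v| \leq d$.

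Part (3) will be the main obstacle, and the key idea is that the $S_v$-condition forces the set of strict predecessors to propagate along edges between $\leq$-maximal vertices. Suppose for contradiction that $u \in A$ has a neighbour $x$ in $C$ and $v \in A$ has a neighbour $y$ in $C$. Since $u$ is not maximal, Definition~\ref{D:elim_order_to_deg} forces $S_u = \emptyset$, so the edge $ux$ makes $u$ and $x$ comparable; as $x$ is maximal and $u$ is not, we get $u < x$, and similarly $v < y$. Now pick any path $x = z_0, z_1, \ldots, z_m = y$ through $C$. Each $z_i$ is maximal, so consecutive $z_i, z_{i+1}$ are $\leq$-incomparable (by the argument of (2)), putting $z_{i+1} \in S_{z_i}$; the second bullet of Definition~\ref{D:elim_order_to_deg} then gives $\{w \mid w < z_i\} = \{w \mid w < z_{i+1}\}$. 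Chaining these equalities along the path yields $\{w \mid w < x\} = \{w \mid w < y\}$, and in particular both $u$ and $v$ lie in $\{w \mid w < x\}$. Since $\leq$ is a tree order, this predecessor set is totally ordered, forcing $u$ and $v$ to be comparable, contrary to assumption. Hence one of $E(u, C)$, $E(v, C)$ must be empty.
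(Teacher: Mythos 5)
Your proof is correct and takes essentially the same approach as the paper's: part (3) in particular uses the identical key idea of chaining the equalities of strict-predecessor sets along a path through the component and then invoking the tree-order property. (One tiny slip in part (2): the implication $v \leq w \Rightarrow v = w$ follows from the maximality of $v$, not of $w$ — but since both are maximal, the conclusion still holds.)
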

\begin{proof}
As any $v\in A$ is non-maximal, by
Definition~\ref{D:elim_order_to_deg}, $S_v = \emptyset$. Hence if
there is an edge between $u,v \in A$, either $u < v$ or $v< u$, and
(1) follows.

Since $G\setminus A$ contains the $\leq$-maximal elements, they are
all incomparable.  By definition of an elimination order to degree
$d$, this means that each vertex in $G\setminus A$ has at most $d$
neighbours in $G\setminus A$, so this graph has degree at most $d$,
establishing (2).  

To show (3), let $C$ be the vertex set of a component of $G \setminus
A$ and let $u, v \in A$ be such that $E(u, C) \neq \emptyset$ and
$E(v, C) \neq \emptyset$.  Then there
are $a, b \in C$ such that $au, bv \in E(G)$.  By
Definition~\ref{D:elim_order_to_deg}, $u<a$ and $v<b$.  Moreover,
there is a path from $a$ to $b$ through $C$ and as all vertices along
this path $\leq$-maximal, if $(a',b')$ is an edge in the path, it must
be that $\{w \mid w < a'\} = \{w \mid w < b'\}$.  By transitivity, 
$\{w \mid w < a\} = \{w \mid w < b\}$, and so $u < b$ and $v < a$.
Since $\leq$ is a tree-order, the set $\{w \mid w < a\}$ is linearly
orderd and we conclude that $u$ and $v$ are comparable.

\end{proof}

We also have a converse to the above in the following sense.
\begin{proposition}\label{P:char2-converse}
Suppose $G$ is a graph with $A \subseteq V(G)$ a set of vertices and $\leq_A$ an
elimination order of $G[A]$ of height $k$, such that:
\begin{enumerate}
\item $G \setminus A$ has degree at most $d$;
\item if $C$ is the vertex set of a component of $G \setminus A$, and
  $u, v \in A$ are incomparable, then either $E(u, C) = \emptyset$
  or $E(v, C) = \emptyset$.
\end{enumerate}
Then, $\leq_A$ can be extended to an elimination order to degree $d$
for $G$ of height $k+1$.
\end{proposition}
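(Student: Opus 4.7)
The plan is to extend $\leq_A$ by slotting the vertices of each component $C$ of $G \setminus A$ directly above the maximum element of $A$ to which $C$ attaches, as $\leq$-maximal siblings with a common down-set. For each component $C$ of $G \setminus A$, let $N_A(C) := \{u \in A \mid E_G(u,C) \neq \emptyset\}$. Condition (2) forces any two elements of $N_A(C)$ to be $\leq_A$-comparable, so $N_A(C)$ is a chain in the tree order $\leq_A$; since $A$ is finite this chain has a maximum $u_C$ whenever $N_A(C)$ is nonempty.

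Define $\leq$ on $V(G)$ to agree with $\leq_A$ on $A$, and for each $v$ lying in some component $C$ of $G \setminus A$, declare $v$ to be $\leq$-maximal with strict down-set
\[
\{w \mid w < v\} := \begin{cases} \{w \mid w \leq_A u_C\}, & \text{if } N_A(C) \neq \emptyset;\\ \emptyset, & \text{otherwise.}\end{cases}
\]
Vertices inside the same $C$ are pairwise $\leq$-incomparable. Because each new down-set is a chain from $\leq_A$, every principal down-set remains linearly ordered, so $\leq$ is a tree order. Its height is at most $k+1$: elements of $A$ keep their $\leq_A$-height (at most $k$), and every $v \in C$ sits one level above $u_C$ (or at height $1$ when $N_A(C) = \emptyset$).

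Next I would verify the conditions of Definition~\ref{D:elim_order_to_deg} for $\leq$. For $v \in A$, any $G$-neighbour $u$ of $v$ is either in $A$, in which case $\leq_A$ being an elimination order of $G[A]$ makes $u,v$ comparable, or in some component $C$ with $v \in N_A(C)$, whence $v \leq_A u_C < u$; thus $S_v = \emptyset$. For $v \in C$, $v$ is $\leq$-maximal; its neighbours in $A$ lie in $N_A(C)$ and hence are $\leq$-below $u_C < v$, so $S_v$ consists exactly of the $G \setminus A$-neighbours of $v$ inside $C$. By condition (1), $|S_v| \leq \deg_{G \setminus A}(v) \leq d$, and every $u \in S_v$ satisfies $\{w \mid w < u\} = \{w \mid w \leq_A u_C\} = \{w \mid w < v\}$ by construction.

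The main obstacle is really just making sure condition (2) gives what is needed: it is what guarantees that $N_A(C)$ is a chain so that a well-defined ``attachment point'' $u_C$ exists, and it is what prevents a component $C$ from forcing $S_v \neq \emptyset$ for some non-maximal $v \in A$. Once that chain property is in hand, the rest is a bookkeeping check that each clause of Definition~\ref{D:elim_order_to_deg} holds and that the height does not exceed $k+1$.
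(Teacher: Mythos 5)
Your construction is exactly the one used in the paper: you extend $\leq_A$ by declaring every $v$ in a component $C$ of $G \setminus A$ to be maximal, with down-set the chain below the $\leq_A$-maximum $u_C$ of $N_A(C)$; the paper phrases this as $u \leq v$ for $v \in C$ iff $E(w,C) \neq \emptyset$ for some $w \geq_A u$, which by condition (2) amounts to the same thing. Your write-up is correct and simply carries out the verification of Definition~\ref{D:elim_order_to_deg} and the height bound in more detail than the paper does.
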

\begin{proof} 
 Let
\begin{align*}
{\leq} := &{\leq_A} \cup \{(v, v) \mid v \in (V(G) \setminus A)\} \\
&\cup
\{(u, v) \mid
  u \in A, v \in C, 
  \text{$C$ a component of $G \setminus A$}, E(w, C)\neq \emptyset 
  \text{ for some $u \leq w$}\}.
\end{align*}
Then it is easily seen that $\leq$ is a tree order on $G$.   Indeed,
$\leq_A$ is, by assumption, a tree order on $A$ and for any $v \in
V(G)\setminus A$, assumption 2 guarantees that $\{w \mid w \leq v\}$
is linearly ordered.

Let $v \in V(G)$ and let $S_v$ be as in
Definition~\ref{D:elim_order_to_deg}.
Suppose $S_v \neq \emptyset$. Then $v \in (V(G) \setminus A)$ and has
degree at most $d$ in $G \setminus A$. By the construction $v$ is
$\leq$-maximal. Let $u \in S_v$. Then there is a component $C$ of $G
\setminus A$ that contains both $u$ and $v$ and thus $\{w \mid w < u\}
= \{w \mid w < v\}$. 
\end{proof}
\begin{remark}
In the following, given a graph $G$ and an elimination order to degree
$d$, $\leq$, we call the subgraph of $V(G)$ induced by the non-maximal elements
of the order $\leq$ the \emph{non-maximal subgraph of $G$ under $\leq$}.

In the proof of Proposition~\ref{P:char2-converse} above,
a suitable tree order on a subset $A$ of $V(G)$ is extended to an elimination
order to degree $d$ of $G$ by making all vertices not in $A$ maximal
in the order.  This is a form of construction we use repeatedly below.
\end{remark}

The alternative characterisations of elimination order to degree $d$
established above are very useful.  
In the next section, we use them to construct a
\emph{canonical} elimination order to degree $d$ of $G$, based on an
elimination order of a graph we call the \emph{torso} of $G$, which
consists of the high-degree vertices of $G$, along with some additional
edges.

\section{Canonical Elimination Order to Bounded Degree}
The aim of this section is to show that if a graph $G$ has elimination
distance $k$ to degree $d$, then there is an elimination order to
degree $d$ whose height is still bounded by a function of $d$ and $k$
and which is \emph{canonical}.  To be precise, we identify a graph
which we call the $d$-degree torso of $G$, which contains all the
vertices of $G$ of degree more than $k$ and has additional edges to
represent paths between these vertices that go through the rest of
$G$.  We show that this torso necessarily has tree-depth bounded by a
function of $k$ and $d$ and the canonical elimination order witnessing
this can be extended to an elimination order to degree $d$ of $G$.
The result is established through a series of lemmas.  A pattern of
construction that is repeatedly used here is that we define a certain
set $A$ of vertices of $G$ and construct an elimination order of
$G[A]$.  It is then shown that extending the order by making all
vertices in $V(G)\setminus A$ maximal yields an elimination order to
degree $d$ of $G$.  Necessarily, in this extended order, all the
non-maximal elements are in $A$.

The following lemma establishes that if $G$ has elimination distance
$k$ to degree $d$ and moreover the degree of $G$ is at most $k+d$,
then we can construct an alternative elimination order on $G$ in which
all the vertices of degree greater than $d$ are included in the
non-maximal subgraph and the height of the new elimination order is
still bounded by a function of $k$ and $d$.
\begin{lemma} \label{L:adding_vertices}
Let $G$ be a graph with maximal degree $\Delta(G) \leq k+d$. Let
$\leq$ be an elimination order to degree $d$ of height $k$ of $G$ with
non-maximal subgraph $H$, and 
let $A = V(H) \cup \{v \in V(G) \mid \deg_G(v) > d\}$.  Then $G$ has an
elimination order $\sqsubseteq$ of height at most $k(k+d+1)$ for which
the non-maximal elements are in $A$.
\end{lemma}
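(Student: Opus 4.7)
My plan is to apply Proposition~\ref{P:char2-converse} to a carefully built elimination order $\leq_A$ of $G[A]$. Hypothesis~(1) of that proposition -- that $G \setminus A$ has maximum degree at most $d$ -- is immediate, since $A$ contains every vertex of $G$ of degree greater than $d$. It remains to construct $\leq_A$ with height at most $k(k+d+1) - 1$ and to verify the component-separation hypothesis~(2).

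Write $B := A \setminus V(H)$, and for each $u \in V(H)$ let $B_u$ be the set of elements of $B$ whose $\leq$-parent is $u$. A key observation from Definition~\ref{D:elim_order_to_deg} is that every $v \in B$ is $\leq$-maximal, and every graph-neighbour of $v$ is either in $S_v$ (with $|S_v| \leq d$) or $\leq$-comparable to $v$; since $v$ is maximal, in the latter case the neighbour must be a $\leq$-ancestor of $v$, and since $\deg_G(v) > d \geq |S_v|$, the vertex $v$ always has at least one $\leq$-ancestor that is also a graph-neighbour. Moreover, elements of $S_v$ are $\leq$-maximal and share the same strict $\leq$-predecessors as $v$, so edges of $G$ with both endpoints in $B$ are confined to a single $B_u$.

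Construct $\leq_A$ from $\leq$ restricted to $V(H)$ (an elimination order of $G[V(H)]$ of height $k-1$ by Proposition~\ref{P:elim_order_char2}) by inserting, for each $u \in V(H)$, the vertices of $B_u$ as a chain between $u$ and the $\leq$-children of $u$ that lie in $V(H)$. The result is a tree order on $A$ and an elimination order of $G[A]$: edges inside $V(H)$ are inherited from $\leq$; edges inside a single $B_u$ are comparable in its chain; and any edge from some $v \in B_u$ to a $V(H)$-vertex goes to a $\leq$-ancestor of $u$ (by the observation), which remains $\leq_A$-below $v$. The longest chain of $\leq_A$ climbs some chain $w_1 <_\leq \cdots <_\leq w_\ell = u$ in $V(H)$ (with $\ell \leq k-1$) and continues through the $B_u$-chain, giving length $\ell + \sum_{j \leq \ell} |B_{w_j}|$. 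Charging each $v \in B_{w_j}$ to one of its ancestor-neighbours among $\{w_1, \ldots, w_j\}$ and using $\deg_G(w_i) \leq k+d$ yields $\sum_j |B_{w_j}| \leq \sum_i |N_G(w_i) \cap B| \leq \ell(k+d) \leq (k-1)(k+d)$. Hence $\leq_A$ has height at most $(k-1)(k+d+1)$, and Proposition~\ref{P:char2-converse} contributes one more level to give $\sqsubseteq$ of height at most $k(k+d+1)$.

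Hypothesis~(2) of the proposition holds because any component $C$ of $G \setminus A$ consists of $\leq$-maximal vertices which, by iterating the $S$-condition along the edges of $C$, must share the same strict $\leq$-predecessors and hence a common $\leq$-parent $u_C$; the $A$-neighbourhood of $C$ therefore lies in the chain of $\leq$-ancestors of $u_C$ (together with $u_C$) plus a subset of $B_{u_C}$, which by construction all sit on a single chain of $\leq_A$. The main subtlety in this plan is the charging bound on $\sum_j |B_{w_j}|$, which rests on the observation that every $B$-vertex has a $\leq$-ancestor that is also a graph-neighbour.
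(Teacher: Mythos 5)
Your proof is correct, and it takes a route genuinely different from the paper's. The paper groups the vertices of $A\setminus V(H)$ by first passing to the component $C_w$ of $G\setminus V(H)$ containing $w$, then taking a $\leq$-maximal element of $V(H)$ above all of $N(C_w)$; these groups $W_b$ are then appended as chains at the \emph{tops} of $H$'s branches, and the order $\sqsubseteq$ is built directly on all of $V(G)$. You instead group the new vertices by their immediate $\leq$-parent in $V(H)$ — using the observation that each $w\in A\setminus V(H)$ has $\deg_G(w)>d\geq|S_w|$ and hence at least one $\leq$-ancestor among its neighbours, so its parent exists and lies in $V(H)$ — and splice each $B_u$ in as a chain between $u$ and its $H$-children, distributing the insertions throughout the tree rather than only at the leaves. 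You then reduce the verification cleanly to Proposition~\ref{P:char2-converse}: hypothesis~(1) is immediate, and hypothesis~(2) follows from iterating the $S_v$-condition over a component of $G\setminus A$ to show its $A$-neighbourhood lies along a single chain. Your height analysis uses a charging argument ($\sum_j|B_{w_j}|\leq\sum_i|N_G(w_i)\cap B|\leq\ell(k+d)$) in place of the paper's bound $|W_b|\leq k(k+d)$, and in fact gives the slightly sharper $(k-1)(k+d+1)+1\leq k(k+d+1)$. A small bonus of your approach is that it sidesteps the paper's phrase ``there is a unique $b\in B$ such that $b\geq a$ for all $a\in N(C_w)$,'' which is not literally a uniqueness statement in general (one must simply fix a choice); your parent-based grouping is unambiguous by construction.
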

\begin{proof}
Let $G, H, A$ and $\leq$ be as in the statement of the lemma.
We will adapt $\leq$ to an elimination order $\sqsubseteq$ of $G[A]$.

Let $B$ be the set of $\leq$-maximal elements in $V(H)$.
For each $w \in A \setminus V(H)$ let $C_w$ be the component of $G
\setminus V(H)$ that contains $w$. Note that $N(C_w) \neq \emptyset$,
because $\deg(w) > d$, so at least one vertex in $H$ must be adjacent
to $w$. By Definition~\ref{D:elim_order_to_deg}, all
vertices in $N(C_w)$ are $\leq$-comparable, so they are linearly ordered and
there is a unique $b \in B$ such that $b \geq a$ for all $a \in
N(C_w)$.  We write $b(w)$ to denote this element of $B$ associated
with every $w \in A \setminus V(H)$.
For each $b \in B$, let $W_b := \{w \in A \setminus V(H) \mid b(w) =
b\}$, and let $\sqsubseteq_b$ be an arbitrary linear order on
$W_b$.

For any $u, v \in V(G)$, define $u \sqsubseteq v$ if one of the
following holds:
\begin{itemize}
\item $u = v$;
\item $u, v \in H$ and $u \leq v$;
\item $u \in H$, $v \in G \setminus A$ and $u \leq v$;
\item $u \in H$, $v \in A\setminus V(H)$ and $u \leq b(v)$;
\item $u \in A \setminus V(H)$, $v \in G \setminus A$ and $b(u)
  \leq v$;
\item $u, v \in H'\setminus V(H)$, $b(u)= b(v)$ and $u \sqsubseteq_b v$.
\end{itemize}

It follows from the construction that $\sqsubseteq$ restricted to $A$
is an elimination order of $G[A]$, and that $\sqsubseteq$ is an
elimination order to degree $d$ of $G$.

For each $b \in B$, the set $\{v \in H \mid v \leq b\}$ has at most
$k$ elements, by the assumption on the height of the order $\leq$.
Since $G$ has maximum degree $k+d$ and  $W_b \subseteq
N(\{v \in H \mid v \leq b\})$, we have that $W_b$ has at most $k(k+d)$
vertices.  Since the height of any $\sqsubseteq$ chain is at most the
height of a $\leq$-chain plus $|W_b|$, we conclude that 
the height of $\sqsubseteq$ is at most $k(k+d+1)$.
\end{proof}

The lemma above allows us to re-arrange the elimination order so that
it includes all vertices of large degree.  In contrast, the next lemma
gives us a means to re-arrange the elimination order so that all
vertices of small degree are made maximal in the order.  This is again
done achieved while keeping the height of the elimination order
bounded by a function of $k$ and $d$.

\begin{lemma} \label{L:removing_vertices}
Let $G$ be a graph. Let $\leq$ be an elimination order to degree
$d$ of $G$ of height $k$ with non-maximal subgraph $H$, such that $H$
contains all vertices of degree greater than $d$, and 
let $A = \{v \in V(H) \mid \deg_G(v) > d\}$. Then, there is an
elimination order to degree $d$ of $G$ of height at most 
$k((k+1)d)^{2^k}+1$  for which all the non-maximal elements are in $A$.
\end{lemma}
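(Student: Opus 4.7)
The plan is to reduce the statement to a tree-depth bound on an auxiliary graph and then invoke Proposition~\ref{P:char2-converse}. Define the \emph{$d$-torso} of $G$ relative to $A$, denoted $T$, to be the graph on vertex set $A$ in which $uv$ is an edge whenever $u,v \in A$ and there is a $u$-$v$ path in $G$ all of whose internal vertices lie in $V(G) \setminus A$. Equivalently, $T$ contains every edge of $G[A]$, and for each component $C$ of $G \setminus A$ the set $N_G(C) \cap A$ forms a clique of $T$.

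Any elimination order $\sqsubseteq$ of $T$ of height $h$ then satisfies the hypotheses of Proposition~\ref{P:char2-converse}: it is an elimination order of $G[A]$ because $E(G[A]) \subseteq E(T)$; the graph $G \setminus A$ has maximum degree at most $d$, because every vertex outside $A$ is either in $V(H) \setminus A$ (low-degree by the definition of $A$) or in $V(G) \setminus V(H)$ (low-degree by the lemma's hypothesis that $H$ contains every vertex of degree greater than $d$); and for each component $C$ of $G \setminus A$, the set $N_G(C) \cap A$ is a clique of $T$ and so a chain in $\sqsubseteq$. Thus $\sqsubseteq$ extends to an elimination order to degree $d$ of $G$ of height $h+1$ whose non-maximal elements all lie in $A$, and it suffices to prove that $\td(T) \leq k((k+1)d)^{2^k}$.

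I would prove this bound by induction on $k$. The base case is immediate because $V(H)$, and hence $A$, is empty. For the inductive step, observe first that an edge of $G[A]$ already joins two $\leq$-comparable elements of $A$: a $G$-edge between incomparable elements of $A$ would have to appear in a common $S_w$-clique, forcing both endpoints to be $\leq$-maximal, contradicting $A \subseteq V(H)$. Hence $G[A]$ alone has tree-depth at most $k-1$, witnessed by the restriction of $\leq$ to $A$, and the real task is to bound the effect of the additional shortcut edges of $T$ that come from paths through $V(G) \setminus A$. I treat each connected component of $G$ separately; on a connected $G$ with unique $\leq$-minimum $r$, there are two cases. If $r \in A$, I place $r$ at the top of the new order on $T$ and apply the inductive hypothesis to each component of $G \setminus r$, whose induced elimination order has height at most $k-1$. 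If $r \in V(H) \setminus A$, then $\deg_G(r) \leq d$, so $r$ meets vertices in at most $d$ of the $\leq$-branches above it, and the component of $G \setminus A$ containing $r$ can merge up to $d$ otherwise-incomparable sub-torsos into one connected piece of $T$. This is handled by collecting the $\leq$-minimal $A$-vertices of the merged piece into a chain at the top of $\sqsubseteq$, converting an antichain of branch-roots into a chain and thereby replacing a maximum with a sum in the recursion; careful bookkeeping then produces a recursion of the shape $f(k) \leq (k+1)d \cdot f(k-1)^2$, which solves to $f(k) \leq k((k+1)d)^{2^k}$.

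The main obstacle is precisely this merging step: a single low-degree root $r \notin A$ can fuse up to $d$ previously incomparable sub-torsos of $T$, and when the fusion cascades through the $k$ levels of the $\leq$-order each level contributes a factor of roughly $(k+1)d$ together with a squaring, producing the $((k+1)d)^{2^k}$ growth. Showing that an appropriate choice of $\sqsubseteq$ keeps the cascade no worse than this, and that the inductive hypothesis can be applied cleanly to each merged piece after its top-level chain has been extracted, is the delicate part of the argument; the additive $+1$ in the final bound then accounts purely for the extension from $T$ to $G$ provided by Proposition~\ref{P:char2-converse}.
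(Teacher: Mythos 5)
Your high-level framing is sound and aligns with the paper: defining the torso $T$ on vertex set $A$ with shortcut edges for paths through $G\setminus A$, proving a tree-depth bound on $T$, and invoking Proposition~\ref{P:char2-converse} to extend to an elimination order to degree $d$ of $G$ is essentially what the paper does (the paper constructs the order on $G[A]$ directly rather than passing explicitly through $T$, but shows precisely the comparability property that makes it an elimination order of $T$).

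The gap is in the claimed inductive bound on $\td(T)$. You assert a recursion of the form $f(k)\leq (k+1)d\cdot f(k-1)^2$ and say it follows from ``careful bookkeeping,'' but the bookkeeping is not done and I do not believe this recursion arises from the argument you sketch. The crux of the lemma is bounding, for each $v\in V(H)\setminus A$, the number of $A$-vertices reachable from $v$ by a path through $G\setminus A$ (equivalently, when the root $r\not\in A$, the size of $N_G(Z_r)\cap A$ where $Z_r$ is the component of $G\setminus A$ containing $r$, which becomes a clique in $T$). The paper bounds this quantity \emph{directly} by a path-length argument: it forms an auxiliary graph $G'[V(H\setminus A)]$ with shortcut edges, argues it inherits $\leq$ as an elimination order and hence has tree-depth at most $k$, so paths have length at most $2^k$, and each step has branching at most $(k+1)d$, giving $((k+1)d)^{2^k}$. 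This is a separate counting argument, orthogonal to the recursive structure; your recursion does not supply it. ``Converting an antichain of branch-roots into a chain, replacing a maximum with a sum'' would at best give an additive or linear combination of the branch depths (e.g.\ $f(k)\leq d\cdot f(k-1)$ or $f(k)\leq |C_r|+f(k-1)$), not a squaring.

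There is also a structural obstacle your sketch does not address: after deleting the root $r$, the set $C_r := N_G(Z_r)\cap A$ restricted to a single branch $B_i$ is \emph{not} in general a clique of the branch torso $T_i$, because two $A$-vertices in $B_i$ may be joined in $Z_r$ only by a path that leaves $B_i$ through $r$ and re-enters. Thus the inductive hypothesis gives no control over how $C_r\cap V(T_i)$ sits inside an elimination order of $T_i$, and the plan of ``collecting the $\leq$-minimal $A$-vertices into a chain at the top'' does not cleanly decouple the top chain from the recursive subproblems. To make a recursive argument work you would need to strengthen the inductive hypothesis (e.g.\ to guarantee a prescribed clique can be placed as a prefix chain of bounded length), and justifying the length of that chain ultimately requires the same reachability/path-length bound the paper proves. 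As it stands, the proposal replaces the one genuinely hard step of the lemma with an unverified recursion.
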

\begin{proof}
  Let $G, H, A$ and $\leq$ be as in the statement of the lemma.
  We assume that $G$ is connected -- if not, we can apply the argument
  to each component of $G$.
  We construct an elimination order $\sqsubseteq$ of $G[A]$ from
  $\leq$, making sure that it has height at most $k((k+1)d)^{2^k}$.
  This extends to an elimination order to degree $d$ of $G$ by making
  all vertices not in $A$ maximal, as in the
  Proposition~\ref{P:char2-converse}.

  Let $J := H \setminus A$.
  
  For $v \in V(J)$, let $K_v$ be the set of vertices $w \in A$
  such that:
  \begin{enumerate}
  \item $v \leq w$; 
  \item there is a path from $v$ to $w$ through $G \setminus A$; and
  \item for any $u$ with $u < v$, there is no path from $u$ to $w$
    through $G \setminus A$.
  \end{enumerate}
  Note that because $\leq$ is a tree order and the third condition,
  the sets $K_v$ are pairwise disjoint.
  Let $\overline K := A \setminus (\bigcup_{v \in V(J)}
  K_v)$ be the set of vertices in $A$ that are not contained in
  $K_v$ for any $v$.

  For each $v \in V(J)$, let $\sqsubseteq_v$ be an arbitrary linear
  order on $K_v$.    The idea behind the construction below is that we
  replace  $v$ in the elimination order by $K_v$, ordered by $\sqsubseteq_v$.
Formally, for any $u, w \in V(G)$, define $u \sqsubseteq w$ if
  one of the following holds:
  \begin{itemize}
  \item $u = w$;
  \item $u \in K_v$, $w \in G \setminus A$ and $v \leq w$;
  \item $u \in \overline K$, $w \in G \setminus A$ and $u \leq w$;
  \item $u, w \in K_v$ and $u \sqsubseteq_v w$;
  \item $u \in K_v$, $w \in K_{v'}$ and $v < v'$;
  \item $u \in \overline K$, $w \in K_v$ and $u \leq v$;
  \item $u \in K_v$, $w \in \overline K$ and $v \leq w$;
  \item $u, w \in \overline K$ and $u \leq w$.
  \end{itemize}

  We first show that $\sqsubseteq$ is an elimination order for $G[A]$. The
  construction ensures $\sqsubseteq$ is a tree order. 
  Let $u, w \in V(H')$. We show that if $u \leq w$, then either $u
  \sqsubseteq w$ or $w \sqsubseteq u$.
  We go through all possible
  cases: If $u = w$, we have $u \sqsubseteq w$. If there is some
  $v \in V(J)$ such that $u, w \in K_v$, then $u \sqsubseteq w$ or $w
  \sqsubseteq u$. If $u \in K_v$, $w \in K_{v'}$ for two different $v,
  v' \in V(J)$, then $v \leq u \leq w$ and $v' \leq w$, so $v' \leq v$
  and thus $w \sqsubseteq u$. If $u \in \overline K$ and $w \in K_v$,
  then both $u, v \leq w$, so either $u \leq v$ or $v \leq u$, and
  thus either $u \sqsubseteq w$ or $w \sqsubseteq u$. The case where
  $u \in K_v$, $w \in \overline K$ is symmetric. Finally, if both $u,
  w \in \overline K$, then $u \sqsubseteq w$. Thus if $uw \in E(H')$,
  we have $u \leq w$ or $w \leq u$ and therefore $u \sqsubseteq w$ or
  $w \sqsubseteq u$. Hence $\sqsubseteq$ is an elimination order for
  $G[A]$.

  Let $Z$ be a component of $G \setminus A$. 
  We assumed that $H$ contains all vertices of degree greater than
  $d$, and by the construction $A$ also contains all those
  vertices. Thus $Z$ has maximum degree $d$.

  Suppose $u, v \in A$ are two vertices that are connected to $Z$,
  i.e. $E_G(u, V(Z)) \neq \emptyset \neq E_G(v, V(Z))$. We
  show that either $u \sqsubseteq v$ or $v \sqsubseteq u$.
  Note that there is a path $P$ through $Z \subseteq G \setminus A$
  from $u$ to $v$, i.e.\ all vertices in $P$, except for the endpoints,
  lie outside of $A$.
  If $P$ contains no vertices from $J$, then
  the connected component $Z'$ of $G \setminus V(H)$ containing $P \setminus
  \{u,v\}$ satisfies $E_G(u, V(Z')) \neq \emptyset \neq E_G(v, V(Z'))$
  and thus $u \leq v$ or $v \leq u$, and therefore by the above $u
  \sqsubseteq v$ or $v \sqsubseteq u$.
  
  Otherwise, $P$ contains vertices from $J$. Let $w$
  be a $\leq$-minimal vertex in $V(P) \cap V(J)$. Then there
  is a path outside of $A$ from $w$ to $u$, and also to $v$ (both part of
  $P$). Moreover, if neither $u \leq v$ nor $v \leq u$, then $w \leq u$
  and $w \leq v$.
  Thus $u$ and $v$ are in $K_w$ (or in $K_{w'}$ for some $w' < w$), and
  therefore $u \sqsubseteq v$ or $v \sqsubseteq u$.

  It remains to show that the size of $K_v$ is bounded by $k((k+1)d)^{2^k}$
  for all $v \in V(J)$. Let $G'$ be the graph obtained from $G$
  by adding an edge between two vertices $s,t \in V(J)$ whenever there is a
  path through $G \setminus V(H)$ between $s$ and $t$. This increases the
  degree of vertices in $V(J)$ by at most $kd$, because each of these vertices is
  connected to at most $d$ components of $G \setminus V(H)$ and each of
  these is connected to at most $k$ vertices in $H$. Now there is a
  path between two vertices $s,t \in J$ in $G$ outside of $A$ if and only if there
  is a path between $s$ and $t$ in $G'[V(J)]$. Moreover, $\leq$ is
  also an elimination order for $G'[V(J)]$.
  So, as $G'[V(J)]$ has tree-depth at most $k$ it does not contain
  a path of length more than $2^k$. Since each vertex on the path
  has degree at most $(k+1)d$, we can reach at most $((k+1)d)^{2^k}$
  vertices in $A$ on paths only containing vertices outside of
  $A$. Thus $|K_v| \leq ((k+1)d)^{2^k}$ and the height of
  $\sqsubseteq$ is bounded by $k|K_v| \leq k((k+1)d)^{2^k}$.
\end{proof}

Next we introduce the notion of $d$-degree torso and prove that it
captures the properties that we require of an elimination tree to
degree $d$.

\begin{definition}
Let $G$ be a graph, let $d > 0$ and let $H$ be the induced subgraph of
$G$ containing the vertices of degree larger than $d$.
The \emph{$d$-degree torso of $G$} is the graph $C$ obtained from $H$
by adding an edge between two vertices $u,v \in H$ if there is a path
through $G \setminus V(H)$ from $u$ to $v$ in $G$.

\end{definition}

The next lemma establishes an upper bound on the tree-depth of the
torso of a graph when the maximum degree is bounded.

\begin{lemma} \label{L:torso_and_tree-depth}
Let $G$ be a graph and let $C$ be the $d$-degree torso of $G$. Let $H
= G[V(C)]$ and let ${\leq}$ be an elimination order for $H$. Then ${\leq}$
is an elimination order for $C$ of height $h$ if, and only if, ${\leq}$ can be extended
to an elimination order to degree $d$ for $G$ of height $h+1$.
\end{lemma}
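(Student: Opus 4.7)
The plan is to reduce both directions of the biconditional to the characterisations in Propositions~\ref{P:elim_order_char2} and~\ref{P:char2-converse}, with $A := V(H) = V(C)$ playing the same role on both sides. Following the convention of the remark after Proposition~\ref{P:char2-converse}, I take ``extending $\leq$'' to mean placing every vertex of $V(G) \setminus V(H)$ as a $\sqsubseteq$-maximal element above the order $\leq$ on $V(H)$, so that $V(H)$ coincides with the non-maximal subgraph of $\sqsubseteq$.

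For the forward direction, assuming $\leq$ is an elimination order for $C$ of height $h$, I would invoke Proposition~\ref{P:char2-converse} with $A := V(H)$. Its first hypothesis, that $G \setminus V(H)$ has degree at most $d$, is immediate from the definition of $V(H)$ as the vertices of $G$-degree greater than $d$. For the second hypothesis I would argue by contradiction: if some component $Z$ of $G \setminus V(H)$ satisfied $E(u, V(Z)) \neq \emptyset \neq E(v, V(Z))$ for a pair of $\leq$-incomparable $u, v \in V(H)$, then picking edges $ua$, $vb$ with $a,b \in V(Z)$ and any $a$-to-$b$ path inside $Z$ would produce a $u$-to-$v$ path through $G \setminus V(H)$, making $uv$ a $d$-degree torso edge of $C$ and contradicting that $\leq$ is an elimination order for $C$. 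Proposition~\ref{P:char2-converse} then supplies the required extension of $\leq$ to an elimination order to degree $d$ for $G$ of height $h + 1$.

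For the backward direction, given a $\sqsubseteq$ obtained by extending $\leq$ in this canonical way to an elimination order to degree $d$ for $G$ of height $h + 1$, the non-$\sqsubseteq$-maximal subgraph is exactly $V(H)$. Proposition~\ref{P:elim_order_char2} then implies that $\sqsubseteq$ restricted to $V(H)$, which equals $\leq$, is an elimination order for $H$ of height $h$, and also delivers the component-separation property (3). To verify that $\leq$ respects each torso edge $uv \in E(C) \setminus E(H)$: a $u$-to-$v$ path through $G \setminus V(H)$ witnesses $E(u, V(Z)) \neq \emptyset \neq E(v, V(Z))$ for the component $Z$ of $G \setminus V(H)$ containing its interior, so (3) forces $u$ and $v$ to be $\sqsubseteq$-comparable, and hence $\leq$-comparable. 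Combined with $\leq$ already respecting $E(H) \subseteq E(C)$, this shows $\leq$ is an elimination order for $C$ of height $h$.

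The main obstacle here is conceptual rather than computational: one must pin down precisely what ``extending'' means and then check that the single set $A = V(H)$ simultaneously plays the role of the ``input subset'' in Proposition~\ref{P:char2-converse} (used in the forward direction) and of the ``non-maximal subgraph'' in Proposition~\ref{P:elim_order_char2} (used in the backward direction). Once this identification is made, the argument becomes a routine translation between torso edges of $C$ and paths through components of $G \setminus V(H)$, with the $+1$ on the extension side accounted for by the single maximal level contributed by $V(G) \setminus V(H)$.
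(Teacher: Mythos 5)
Your proof takes essentially the same route as the paper's: both directions are reduced to the characterisations in Propositions~\ref{P:elim_order_char2} and~\ref{P:char2-converse} with $A = V(H) = V(C)$, translating torso edges into the path/component condition on $G \setminus V(H)$. One small imprecision worth flagging: the non-$\sqsubseteq$-maximal subgraph of the canonical extension need not be \emph{all} of $V(H)$ (a $\leq$-maximal vertex of $V(H)$ whose neighbours all lie in $V(H)$ stays $\sqsubseteq$-maximal), but this is harmless since the two vertices $u,v$ to which you apply part (3) of Proposition~\ref{P:elim_order_char2} are each adjacent to the relevant component $Z$ of $G \setminus V(H)$ and so are necessarily non-maximal in the extension.
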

\begin{proof} 
Let $G, C, H$ and $\leq$ be as above.

Suppose $\leq$ is an elimination order for $C$. Since $C$ is a
supergraph of $H$, this means that $\leq$ is an elimination order for
$H$. Let $Z$ be a component of $G \setminus V(H)$. Since $C$ contains
all vertices of degree greater than $d$, $Z$ has maximal degree $d$.
If $E(Z, u) \neq \emptyset$ and $E(Z, v) \neq \emptyset$ for two
vertices $u, v \in H$, then there is a path through $Z \subseteq G
\setminus V(H)$ connecting $u$ and $v$, so by the definition of the
$d$-degree torso $uv \in E(C)$ and thus $u, v$ are $\leq$-comparable. We can
extend $\leq$ to a tree order $\leq'$ on $V(G)$ where all the vertices from
$V(G) \setminus V(H)$ are maximal.

Conversely assume that $\leq$ can be extended to an elimination order
to degree $d$ for $G$.
Let $uv \in E(C)$. If $uv \in E(H)$, then $u$ and $v$ must be
$\leq$-comparable. Otherwise $uv \not\in E(H)$, so there is a path
through $G \setminus V(H)$ from $u$ to $v$ in $G$, i.e. both $u$ and $v$ are
connected to a component $Z$ of $G \setminus V(H)$ and thus
comparable. Therefore $\leq$ is an elimination order for $C$.
\end{proof}

\begin{lemma} \label{L:bounded_case}
Let $G$ be a graph with elimination distance $k$ to degree $d$ and
maximum degree $\Delta(G) \leq k + d$. Let $C$
be the $d$-degree torso of $G$ and let $\leq$ be a minimum height
elimination order for $C$. Then $\leq$ has height at most
$k(k+d+1)((k(k+d+1)+1)d)^{2^{k(k+d+1)}}$.

\end{lemma}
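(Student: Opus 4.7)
The plan is to combine Lemmas~\ref{L:adding_vertices} and~\ref{L:removing_vertices} successively to build an elimination order to degree $d$ for $G$ whose non-maximal subgraph is exactly $G[V(C)]$, and then read off the bound on the tree-depth of $C$ via Lemma~\ref{L:torso_and_tree-depth}.

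First I would invoke Proposition~\ref{prop:tree-depth-order} on the hypothesis $\edd(G) \leq k$ to obtain an elimination order $\leq_0$ to degree $d$ for $G$ of height $k$. Let $H_0$ be its non-maximal subgraph. Because $\Delta(G) \leq k+d$, the hypotheses of Lemma~\ref{L:adding_vertices} are satisfied, so applying it to $\leq_0$ yields an elimination order $\sqsubseteq_1$ to degree $d$ for $G$ whose non-maximal elements lie in $V(H_0) \cup \{v \in V(G) \mid \deg_G(v) > d\}$ and whose height is at most $k_1 := k(k+d+1)$. In particular, the non-maximal subgraph $H_1$ of $\sqsubseteq_1$ contains every vertex of $G$ of degree greater than $d$, namely all of $V(C)$.

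Next I would apply Lemma~\ref{L:removing_vertices} to $\sqsubseteq_1$, taking $k$ in that lemma to be $k_1$: its hypothesis that $H_1$ contains all vertices of degree greater than $d$ holds, and the conclusion produces an elimination order $\sqsubseteq_2$ to degree $d$ for $G$ whose non-maximal elements form exactly
\[
\{v \in V(H_1) \mid \deg_G(v) > d\} \;=\; \{v \in V(G) \mid \deg_G(v) > d\} \;=\; V(C),
\]
of height at most $k_1\bigl((k_1+1)d\bigr)^{2^{k_1}} + 1$. Thus the restriction of $\sqsubseteq_2$ to $V(C)$ is an elimination order for $G[V(C)] = H$ of height at most $k_1\bigl((k_1+1)d\bigr)^{2^{k_1}}$, and $\sqsubseteq_2$ itself is the extension described in Lemma~\ref{L:torso_and_tree-depth}, so that restriction is in fact an elimination order for $C$.

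Finally, by Lemma~\ref{L:torso_and_tree-depth} this restriction is an elimination order for the $d$-degree torso $C$ of height at most $k_1\bigl((k_1+1)d\bigr)^{2^{k_1}} = k(k+d+1)\bigl((k(k+d+1)+1)d\bigr)^{2^{k(k+d+1)}}$. A minimum height elimination order for $C$ can only be shorter, which gives the claimed bound. The argument is essentially bookkeeping: all the real work sits in the two rearrangement lemmas, and the only thing to be careful about is instantiating the parameter in Lemma~\ref{L:removing_vertices} with the \emph{new} height $k_1$ rather than the original $k$, which is what produces the double-exponential-in-$k$ term through the $2^{k(k+d+1)}$ factor.
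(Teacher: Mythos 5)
Your proof follows the paper's argument exactly: start from a minimum-height elimination order to degree $d$ of height at most $k$, apply Lemma~\ref{L:adding_vertices} to pull the high-degree vertices into the non-maximal subgraph (height now at most $k_1 = k(k+d+1)$), then apply Lemma~\ref{L:removing_vertices} with parameter $k_1$ to shrink the non-maximal set to exactly $V(C)$, and finish with Lemma~\ref{L:torso_and_tree-depth}. The one inference you assert without argument---that after Lemma~\ref{L:adding_vertices} the non-maximal subgraph \emph{contains} every vertex of degree greater than $d$, rather than merely being contained in the set $A$---is left equally implicit in the paper's own proof, so this is not a divergence from the intended argument.
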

\begin{proof} 
Let $\sqsubseteq$ be a minimum height elimination order to degree $d$ of
$G$. Since $G$ has elimination distance to degree $d$ at most $k$, the
height of $\sqsubseteq$ is at most $k$. Let $H$ be the non-maximal
subgraph of $G$ under  $\sqsubseteq$ and define
\[
A = V(H) \cup \{v \in V(G) \mid \deg_G(v) > d\}.
\]
By Lemma~\ref{L:adding_vertices}, the graph $G[A]$ has an elimination order
$\preceq$ of height at most $k(k+d+1)$ that can be extended to an
elimination order to degree $d$ for $G$.

Let $A' = \{v \in V(H') \mid \deg_G(v) > d\}$. By
Lemma~\ref{L:removing_vertices}, the graph $A'$ has an elimination order
$\leq$ of height at most $k(k+d+1)((k(k+d+1)+1)d)^{2^{k(k+d+1)}}$ that
can be extended to an elimination order to degree $d$ for $G$.

Lastly note that $A' = V(C)$, so that by
Lemma~\ref{L:torso_and_tree-depth}, $\leq$ is an elimination order
for $C$.
\end{proof}

We are now ready to prove the main result:

\begin{theorem}
  Let $G$ be a graph that has elimination distance $k$ to degree $d$.
  Let $\leq$ be a minimum height elimination order of the $d$-degree torso
  $G$. Then $\leq$ can be extended to an elimination order to degree
  $d$ of $G$ of height at most
    \[
 k((k+1)(k+d))^{2^{k}} + k(1+k+d)(k(1+k+2d))^{2^{k(1+k+d)}}+1.
    \]
\end{theorem}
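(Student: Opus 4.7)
The plan is to invoke Lemma~\ref{L:torso_and_tree-depth}: the minimum-height elimination order $\leq$ of the $d$-degree torso $C$ of $G$ has height $\td(C)$ and, being automatically an elimination order of $G[V(C)]$, extends (by making every vertex of $V(G)\setminus V(C)$ maximal) to an elimination order to degree $d$ of $G$ of height exactly $\td(C)+1$. It therefore suffices to prove
\[
\td(C) \le k((k+1)(k+d))^{2^{k}} + k(1+k+d)\bigl(k(1+k+2d)\bigr)^{2^{k(1+k+d)}}.
\]
The two summands $M_1$ and $M_2$ on the right suggest constructing an elimination order of $C$ in two layers, handling very-high-degree vertices on top and moderate-degree vertices below.

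In the first layer I would single out $V_1 := \{v \in V(G) : \deg_G(v) > k+d\}$. A minimum-height elimination order to degree $d$ of $G$ (of height at most $k$) has every maximal vertex of degree at most $k-1+d$, so $V_1$ lies in its non-maximal subgraph. The same order is also an elimination order to degree $k+d$ whose non-maximal subgraph contains every vertex of degree $>k+d$. Applying Lemma~\ref{L:removing_vertices} with $d$ replaced by $k+d$ then yields an elimination order to degree $k+d$ of $G$ of height at most $M_1 + 1$ whose non-maximal elements are contained in $V_1$. Restricting this order to its non-maximal subgraph gives a tree order on a subset of $V_1$ of height at most $M_1$, which will form the upper portion of our elimination order of $C$.

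In the second layer I would handle $V_2 := V(C)\setminus V_1$ via the subgraph $G' := G\setminus V_1$: it has maximum degree at most $k+d$ and (by restricting any height-$k$ elim order of $G$ to $V(G')$) elimination distance at most $k$ to degree $d$. A bounded-case analysis on $G'$, structurally the same as Lemma~\ref{L:bounded_case} but with the inner degree parameter inflated from $d$ to $2d$ to account for the up-to-$d$ edges that a vertex in $V_2$ may carry into $V_1$ (these edges contribute both to $V_2$-membership of $V(C)$ when the $G'$-degree is small and to the degree bound in the path-counting step of the proof of Lemma~\ref{L:removing_vertices} that underlies Lemma~\ref{L:bounded_case}), bounds the tree-depth of the corresponding torso by $M_2$. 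Stacking the two orders---the $V_1$-order on top, with Stage~2 pieces attached below each branch in the manner of the $K_v$-construction in the proof of Lemma~\ref{L:removing_vertices}---yields a tree order on $V(C)$ of height at most $M_1 + M_2$.

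The main obstacle is verifying that the stacked order really is an elimination order of $C$: besides the edges of $C[V_1]$ and $C[V_2]$, which are handled within their respective layers, $C$ also contains shortcut edges arising from paths through low-degree components of $G$, in particular edges between $V_1$ and $V_2$ that may route through such components and cross the boundary between the two stages. Checking that each such shortcut joins comparable vertices in the stack requires an adaptation of Proposition~\ref{P:char2-converse}, together with the observation (Proposition~\ref{P:elim_order_char2}(3)) that incomparable vertices of an elim order cannot be jointly adjacent to the same low-degree component. Once the stacked order is shown to be an elim order of $C$, adding $1$ via Lemma~\ref{L:torso_and_tree-depth} gives the theorem's bound.
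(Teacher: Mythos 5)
Your proposal follows essentially the same route as the paper: reduce via Lemma~\ref{L:torso_and_tree-depth} to bounding the tree-depth of the $d$-degree torso; build a two-layer order consisting of an elimination order on $V_1=\{v:\deg_G(v)>k+d\}$ (obtained by applying Lemma~\ref{L:removing_vertices} with $d$ replaced by $k+d$) stacked above elimination orders on the torsos of the components of $G\setminus V_1$ (obtained via Lemma~\ref{L:bounded_case}, whose hypotheses hold because those components have degree $\le k+d$ and elimination distance $\le k$); and then show the stacked order is an elimination order of the $d$-degree torso. The paper's proof is exactly this, with the $(k+d)$-degree torso playing the role of your $V_1$-layer and the $v_Z$-attachment rule doing your stacking.

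Two points deserve care. First, your claim that a vertex in $V_2$ carries at most $d$ edges into $V_1$ is not justified: such a vertex merely satisfies $\deg_G(v)\le k+d$, so the number of its neighbours in $V_1$ can only be bounded by $k+d$ a priori, and the resulting $d\mapsto 2d$ substitution is not obviously the right correction. Second, and more importantly, the difficulty you are gesturing at is a \emph{vertex-coverage} issue rather than a degree-counting one: a vertex $v$ with $d<\deg_G(v)\le k+d$ may satisfy $\deg_Z(v)\le d$ in its component $Z$ of $G\setminus V_1$, so $v$ lies in the $d$-degree torso of $G$ but in neither $V_1$ nor the $d$-degree torso of $Z$, and the stacked order as built is not defined on it. One must argue that these leftover vertices can be appended (say as maximal elements) without destroying the elimination-order property on the $d$-degree torso, using the comparability facts you cite from Propositions~\ref{P:elim_order_char2} and~\ref{P:char2-converse}. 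You correctly flag this verification as the main obstacle and leave it open; the paper's own proof handles it only with a terse ``observe that \dots is a subgraph'' remark, so your instinct that this step requires real work is sound.
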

\begin{proof}
  We show that the $d$-degree torso of $G$ has an elimination order of
  height at most $ k((k+1)(k+d))^{2^{k}} + k(1+k+d)(k(1+k+2d))^{2^{k(1+k+d)}}$.  The Theorem then follows by Lemma~\ref{L:torso_and_tree-depth}.

  Let $C$ be the $(k+d)$-degree torso of $G$. We first show that the
  tree-depth of $C$ is bounded by $k((k+1)(k+d))^{2^k}$. To see this, let
  $\sqsubseteq$ be an elimination order to degree $d$ of $G$ of minimum
  height with non-maximal subgraph $H$.
  Note that $H$ contains all vertices of degree greater than $k+d$,
  because vertices in $G \setminus V(H)$ are adjacent to at most $k$
  vertices in $H$.

  Let $A = \{v \in V(H) \mid \deg_G(v) > k+d\}$.
  By Lemma~\ref{L:removing_vertices}, the graph $G[A]$ has an elimination order
  $\preceq$ of depth at most $h:= k((k+1)(k+d))^{2^k}$ that can be extended to an
  elimination order to degree $k+d$ of $G$ of height $h+1$.
  Note that $A = V(C)$, so by Lemma~\ref{L:torso_and_tree-depth},
  the order $\preceq$ is an elimination order for $C$.  Let $\preceq'$
  denote its extension to $G$.

  Let $Z$ be a component of $G \setminus A$ and let $C_Z$ be the
  $d$-degree torso of $Z$. By Lemma~\ref{L:bounded_case}, there is an
  elimination order $\preceq_Z$ for $C_Z$ of height at most
  $k(k+d+1)((k(k+d+1)+1)d)^{2^{k(k+d+1)}}$. Let $v_Z$ be the $\preceq$-maximal
  element in $C$ such that there is a $w \in C_Z$ with $v_Z \preceq'
  w$. Define
  \begin{align*}
    \leq' := &\preceq \cup \bigcup_Z \preceq_Z
             \cup \bigcup_Z \{(v,w) \mid v \preceq' v_Z, w \in C_Z \}.
  \end{align*}
Observe that $C \cup \bigcup_{Z} C_Z$ is a subgraph of the $d$-degree torso of 
  $G$. Thus $\leq'$ is an elimination order for the $d$-degree torso of 
  $G$.  The height of $\leq'$ is bounded by
  \begin{align*}
  \td(C) + \max \{\td(C_Z)\}_Z
  &\leq k((k+1)(k+d))^{2^{k}} + k(1+k+d)(k(1+k+2d))^{2^{k(1+k+d)}}.
  \end{align*}
\end{proof}

\section{Canonisation parameterized by elimination distance to bounded degree}
\label{S:elimination_distance_alg}

In this section we show that graph canonisation, and thus graph
isomorphism, is $\FPT$ parameterized
by elimination distance to bounded degree.  The main idea is to
construct a labelled directed tree $T_G$ from a graph $G$ (of
elimination distance $k$ to degree $d$) that is an isomorphism
invariant for $G$.  From the labelled tree $T_G$ we obtain a canonical
labelled tree using the tree canonisation algorithm from
Lindell~\cite{lindell_logspace_1992}. In the last step we construct a canonical
form of $G$ from the canonical labelled tree.

The tree $T_G$ is obtained from $G$ by taking a tree-depth
decomposition of the $d$-degree torso of $G$ and labelling the nodes
with the isomorphism types of the low-degree components that attach to
them.  The tree-depth decomposition of a graph is just the elimination
order in tree form.
We formally define it as follows:
\begin{definition}\label{def:tree-depth-decomp}
Given a graph $H$ and an elimination order $\leq$ on $H$, the
\emph{tree-depth decomposition} associated with $\leq$ is the directed
tree with nodes $V(H)$ and an arc $a\rightarrow b$ if, and only if, $a
< b$ and there is no $c$ such that $a < c < b$.
\end{definition}
\begin{remark}
The tree-depth decomposition corresponding to an elimination order is
what, in the language of partial orders, is known as its covering relation.
\end{remark}

Note that, in general, the tree-depth decomposition of a graph that is
not connected may be a forest.  By results of
Bouland~\emph{et. al}~\cite{bouland_tractable_2012}, we can construct
a canonical tree-depth decomposition of an $n$-vertex graph of tree-depth $k$ in
time $f(k) \cdot n^c$ for some comuptable $f$ and constant $c$.

Before defining $T_G$ formally, we need one piece of terminology.
\begin{definition}
Let $G$ be a graph and let $\leq$ be a tree order for $G$. The
\emph{level} of a vertex $v \in V(G)$ is the length of the
chain $\{w \in V(G) \mid w \leq v\}$. We denote the
level of $v$ by $\level_\leq(v)$.
\end{definition}

Given a graph $G$ of elimination distance $k$ to degree $d$, let $C$
be the $d$-degree torso of $G$, let $T$ be a canonical tree-depth
decomposition of $C$ and $\leq$ the corresponding elimination order.
Let $Z$ be a component of $G\setminus C$.  We let $Z^C$ denote the
coloured graph that is obtained by colouring each vertex $v$ in $Z$ by
the colour $\{ i \mid uv \in E(G) \text{ for some } u \in C \text{
  with } \level_\leq(u) = i\}$.  We write $F(Z^C)$ for the canonical form
of this coloured graph given by Theorem~\ref{T:bdd_canon}.  Note that,
by the definition of elimination distance, there is, for each $Z$ and
$i$ at most one vertex $u \in C$ with $\level_\leq(u) = i$ which is in
$N_G(Z)$. 

We are now ready to define the labelled tree $T_G$.
The nodes of $T_G$ are the nodes of $T$ together with a new node $r$,
and the arcs are the arcs of $T$ along with new arcs from $r$ to the
root of each tree in $T$.  Define, for each node $u$ of $T_G$,
$\mathcal{Z}_u$ to be the set $\{Z \mid Z \text{ is a component
  of } G\setminus C \text{ with } u \leq\text{-maximal in } C \cap
N_G(Z) \}$ (if $u \neq r)$ and $\{Z \mid Z \text{ is a component of }
G\setminus C \text{ with } C \cap N_G(Z)= \emptyset \}$ (if $u=r$).
Each node $u$ in $T$ carries a label consisting of two parts:

\begin{itemize}
\item $L_w := \{level(w) \mid w < u \text{ and } uw \in E(G)\}$; and
\item the multiset $\{F(Z^C) \mid Z \in \mathcal{Z}_u  \}$.
\end{itemize}

\begin{proposition}\label{prop:canonical_tree}
For any graphs $G$ and $G'$, $T_G$ and $T_{G'}$ are isomorphic
labelled trees if, and only if, $G \cong G'$.
\end{proposition}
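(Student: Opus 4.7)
The plan is to prove the two implications separately, with the nontrivial direction being a reconstruction argument that recovers $G$ from $T_G$ up to isomorphism.

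For the forward direction, I would start from an isomorphism $\phi : V(G) \to V(G')$. Since $\phi$ preserves degrees, it maps $V(C)$ bijectively onto $V(C')$, where $C$ and $C'$ are the respective $d$-degree torsos; the added edges of the torsos are determined by the existence of paths through $G \setminus V(H)$ (resp.\ $G' \setminus V(H')$), which $\phi$ permutes appropriately, so $\phi$ restricts to a genuine isomorphism $C \to C'$. The canonicity of the tree-depth decomposition supplied by the Bouland et al.~construction then guarantees that this lifts to a rooted tree isomorphism $T \to T'$ that also preserves the elimination order $\leq$ and the level function. It remains to verify that both label components are preserved: the first is immediate from the fact that $\phi$ preserves edges and levels; for the second, $\phi$ sends each component $Z$ of $G \setminus C$ to a component $\phi(Z)$ of $G' \setminus C'$, and because $\leq$-maximality in $C \cap N_G(Z)$ is transported by $\phi$, we obtain a bijection $\mathcal{Z}_u \to \mathcal{Z}_{\phi(u)}$; the colouring of $Z^C$ is preserved vertexwise, so $Z^C \cong \phi(Z)^{C'}$ as coloured bounded-degree graphs, whence $F(Z^C) = F(\phi(Z)^{C'})$ by Theorem~\ref{T:bdd_canon}.

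For the backward direction, I would exhibit a reconstruction of $G$, up to isomorphism, from the labelled tree $T_G$. First, discarding the root $r$ recovers $V(C)$, the order $\leq$, and the level function from the tree structure. Second, the induced subgraph $H = G[V(C)]$ is recovered by declaring $uw \in E(H)$ exactly when $w$ is the ancestor of $u$ in $T$ satisfying $\level_\leq(w) \in L_u$; such a $w$ is unique because the ancestors of any vertex lie on a single chain and therefore contain at most one vertex per level. Third, for each node $u$ the multiset $\{F(Z^C) : Z \in \mathcal{Z}_u\}$ is decoded, via Theorem~\ref{T:bdd_canon}, into a collection of coloured graphs representing the components $Z \in \mathcal{Z}_u$; each such $Z$ is then attached to $V(C)$ by reading the colour of every $v \in Z$ as the set of levels of its $C$-neighbours, which on the chain up through $u$ picks out a unique set of ancestors. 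The components labelled at the root $r$ are attached as disjoint pieces with no edges into $V(C)$. This recovers a graph isomorphic to $G$ using only the labelled-tree isomorphism type of $T_G$; performing the same procedure on $T_{G'}$ then yields a graph isomorphic to $G'$, so $T_G \cong T_{G'}$ forces $G \cong G'$.

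The main obstacle is ensuring the reconstruction is unambiguous, and the decisive ingredient is the property noted in the paragraph preceding the proposition: for each component $Z$ of $G \setminus C$ and each level $i$, there is at most one vertex of $C \cap N_G(Z)$ at level $i$. This unique-ancestor-per-level property makes the colouring $Z^C$ lossless, so the colour of a vertex in $Z$ encodes its precise set of $C$-neighbours rather than merely the set of levels they occupy. A secondary point that needs some care is checking that the canonicity of the tree-depth decomposition of $C$ cooperates with the labelling, so that isomorphic inputs really do produce identically labelled outputs; but since the labels are defined entirely in terms of $G$, $\leq$, and canonical forms of coloured components, this follows once the forward direction's preservation claims are verified.
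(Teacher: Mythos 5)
Your proof is correct and follows essentially the same approach as the paper. The only difference is presentational: the paper constructs an explicit isomorphism $G \to G'$ from a given tree isomorphism $T_G \to T_{G'}$ and verifies it by a case analysis on edge types, whereas you frame the backward direction as a reconstruction of $G$ up to isomorphism from the labelled tree alone; the ingredients (the labels $L_u$ recovering $E(G[V(C)])$, the canonical forms recovering the low-degree components, and the unique-ancestor-per-level property making the colouring of $Z^C$ lossless) are identical.
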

\begin{proof}
If $G \cong G'$ then, by construction, their $d$-degree torsos induce
isomorphic graphs.  The canonical tree-depth decomposition of Bouland
et al.\ then produces isomorphic directed trees and the isomorphism
must preserve the labels that encode the rest of the graphs $G$ and
$G'$ respectively.

For the converse direction, suppose we have an isomorphism $\phi$
between the labelled trees $T_G$ and $T_{G'}$.  Since the label $L_u$ of
any node $u$ encodes all ancestors of $u$ which are
neighbours, $\phi$ must preserve all edges and non-edges in the
$d$-degree torso $C$ of $G$.  To extend $\phi$ to all of $G$, for each
node $u$ in $T_G$, let $\beta_u$ be a bijection from
$\mathcal{Z}_u$ to the corresponding set $\mathcal{Z}_{\phi(u)}$ of
components of $G'\setminus C'$, such that $F(Z^C) =
F(\beta_u(Z)^{C'})$ (such a bijection exists as $u$ and $\phi(u)$
carry the same label).  Thus, in particular, there is an isomorphism
between $Z^C$ and $\beta_u(Z)^{C'}$, since they have the same
canonical form.  We define, for each $v \in
V(G)\setminus C$, $\phi(v)$ to be the image of $v$ under the
isomorphism taking the component $Z$ containing $v$ to
$\beta_u(Z)$.  Note that this gives a well-defined function on $V(G)$,
because for each such $v$, there is exactly one node $u$ of $T_G$ such
that the component containing $v$ is in $\mathcal{Z}_u$.  We claim
that $\phi$ is now an isomorphism from $G$ to $G'$.  Let $vw$ be an
edge of $G$.  If both $v$ and $w$ are in $C$, then either $v< w$ or $w
< v$.  Assume, without loss of generality, that it is the former.
Then, $\level(v) \in L_w$ is in the label of $w$ in $T_G$ and since
$\phi$ is a label-preserving isomorphism from $T_G$ to $T_{G'}$,
$\phi(v)\phi(w)$ is an edge in $G'$.  If both $v$ and $w$ are in $G
\setminus C$, then there is some component $Z$ of $G \setminus C$ that
contains them both.  Since $\phi$ maps $Z$ to an isomorphic component
of $G' \setminus C'$, $\phi(v)\phi(w) \in E(G')$.  Finally, suppose
$v$ is in $C$ and $w$ in $G\setminus C$ and let $Z$ be the component
containing $w$.  Then $i := \level(v)$ is part of the colour of $w$ in
$Z^C$ and hence part of the colour of $\phi(w)$ in the corresponding
component of $G'\setminus C'$.  Moreover, if $u$ is the $\leq$-maximal
element in $C \cap N_G(Z)$, then we must have $v \leq u$.  Thus
$\phi(v)$ is the unique element of level $i$ in $C' \cap
N_{G'}(\beta_u(Z))$ and we conclude that $\phi(v)\phi(w) \in E(G')$.
By a symmetric argument, we have that for any edge $vw \in E(G')$,
$\phi^{-1}(v)\phi^{-1}(w) \in E(G)$ and we conclude that $\phi$ is an
isomorphism. 
\end{proof}

With this, we are able to establish our main result.
\begin{theorem}
Graph Canonisation is $\FPT$ parameterized by elimination distance to 
bounded degree. 
\end{theorem}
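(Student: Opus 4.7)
The plan is to turn the invariant $T_G$ of Proposition~\ref{prop:canonical_tree} into a canonisation procedure by canonising the labelled tree and then reading a canonical form of $G$ off of it. Concretely, given $G$ with $\edd(G) \leq k$, I would (i) compute the $d$-degree torso $C$ of $G$, (ii) compute a canonical tree-depth decomposition $T$ of $C$ together with the associated elimination order $\leq$, (iii) for each component $Z$ of $G \setminus V(C)$ compute the canonical form $F(Z^C)$ of the coloured graph $Z^C$, (iv) assemble the labelled tree $T_G$ as defined before Proposition~\ref{prop:canonical_tree}, (v) canonise $T_G$ using Lindell's tree canonisation algorithm~\cite{lindell_logspace_1992}, and (vi) read a canonical form of $G$ directly off this canonical labelled tree.

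Each step can be justified in \FPT\ time. Step~(i) is straightforward: the set $V(C)$ is just the set of vertices of $G$ with $\deg_G(v)>d$, and extra edges are added between two such vertices whenever they share a component of $G \setminus V(C)$; this takes polynomial time. For step~(ii), the main result of Section~5 shows that $\td(C)$ is bounded by a function of $k$ and $d$ only, so the canonical tree-depth decomposition algorithm of Bouland et al.~\cite{bouland_tractable_2012} runs in time $f(k,d)\cdot n^c$ and also gives us the order $\leq$ and the levels needed to compute $Z^C$. In step~(iii), each component $Z$ of $G \setminus V(C)$ has $\Delta(Z) \leq d$, so $Z^C$ is a coloured graph of bounded degree and Theorem~\ref{T:bdd_canon} gives $F(Z^C)$ in polynomial time; the colouring is well defined because, as noted after the definition of $\mathcal{Z}_u$, each component $Z$ has at most one neighbour in $C$ at each level. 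Step~(iv) is then a bookkeeping exercise to attach the pair $(L_u, \{F(Z^C) \mid Z \in \mathcal{Z}_u\})$ to each node $u$, and step~(v) is an off-the-shelf call to Lindell's algorithm, where the labels are compared as strings in some fixed lexicographic encoding.

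For step~(vi), from a canonically labelled tree $T_G^{\mathrm{can}}$ I would reconstruct a canonical graph $G^{\mathrm{can}}$ as follows: take the vertices of $T_G^{\mathrm{can}}$ other than the root as the torso vertices, read edges between them off the labels $L_u$ (which record exactly which ancestors are neighbours), and for each node $u$ attach to it, for each canonical form $F \in \mathcal{Z}_u$, a disjoint copy of the bounded-degree graph encoded by $F$, inserting an edge between a vertex of this copy of colour $S \ni i$ and the unique ancestor of $u$ at level $i$ (for $u \neq r$; copies at $r$ are attached as isolated components). The proof of Proposition~\ref{prop:canonical_tree} already explains why this reconstruction recovers $G$ up to isomorphism, and because the entire construction depends only on $T_G^{\mathrm{can}}$ it is isomorphism-invariant on its input; hence $G \mapsto G^{\mathrm{can}}$ is a canonical form.

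The main obstacle I anticipate is not in the individual pieces but in verifying that the \emph{labels} are themselves treated canonically by the tree canonisation: Lindell's algorithm compares children by their canonical strings, so one must encode $L_u$ and the multiset $\{F(Z^C)\}$ in a way that respects isomorphism of $T_G$ (for example, by sorting $L_u$ and sorting the multiset of canonical forms). Once labels are encoded as opaque comparable strings, the correctness of steps~(v) and~(vi) follows directly from Proposition~\ref{prop:canonical_tree} and from the fact that $F$ is a genuine canonical form on bounded-degree coloured graphs. Putting the time bounds together yields $f(k,d)\cdot n^{O(1)}$ overall, establishing the theorem.
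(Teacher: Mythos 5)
Your proposal matches the paper's proof essentially step for step: compute the $d$-degree torso, apply Bouland et al.'s canonical tree-depth decomposition (using the bound on $\td(C)$ from Section 5), build the labelled tree $T_G$, canonise it with Lindell's algorithm, and read off a canonical graph, with correctness resting on Proposition~\ref{prop:canonical_tree}. The only thing you add is a bit more explicit discussion of encoding labels as comparable strings and of the reconstruction in step (vi), which the paper treats more tersely but in the same spirit.
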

\begin{proof}
Suppose we are given a graph $G$ with $|V(G)| = n$.
We first compute the $d$-degree torso $C$ of $G$ in $O(n^4)$ time.
Using the result from
Bouland~\emph{et. al}~\cite[Theorem 11]{bouland_tractable_2012}, we
can find a canonical tree-depth decomposition for $C$ in time
$O(h(k)n^3log(n))$ for some computable function $h$.  To compute the
labels of the nodes in the trees (and hence obtain) $T_G$, we
determine, for each $u \in C$, the set $\{level(w) \mid w
< u \text{ and } uw \in E(G)\}$.  This can be done in time $O(n^2)$.
Then, we find the components of $G\setminus C$, and colour the
vertices with the levels of their neighbours in $C$.  This can be done
in $O(n^2)$ time. Finally, we compute for each coloured component
$Z^C$ the canonical representative $F(Z^C)$ which, by
Theorem~\ref{T:bdd_canon} can be done in polynomial time (where the
degree of the polynomial depends on $d$).

Having obtained $T_G$, we compute the canonical form $T_G'$ in linear
time using Lindell's canonisation algorithm
\cite{lindell_logspace_1992}.  Using the labels of $T_G'$ one can, in
linear time, construct a graph $G'$ such that $T(G')=T_G'$.  By
Proposition~\ref{prop:canonical_tree}, this is a 
canonical form $G'$ of $G$.
\end{proof}

\begin{corollary}
Graph Isomorphism is $\FPT$ parameterized by elimination distance to 
bounded degree. 
\end{corollary}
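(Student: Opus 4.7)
The plan is to derive the corollary as an immediate consequence of the preceding theorem, which establishes that Graph Canonisation is $\FPT$ parameterized by elimination distance to bounded degree. The key observation is that once canonisation is available within the $\FPT$ time bound, testing isomorphism reduces to comparing canonical forms for equality.

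Concretely, given two input graphs $G$ and $G'$, both with elimination distance at most $k$ to degree $d$, I would first invoke the canonisation algorithm from the main theorem to compute canonical forms $F(G)$ and $F(G')$. Each computation runs in time $f(k,d)\cdot n^c$ for some computable function $f$ and constant $c$, so the total cost is still $\FPT$ in the parameter. Then I would simply compare the two canonical forms bit-by-bit (or as labelled structures), which takes linear time in the size of the representations. By the defining property of a canonical form, $F(G) = F(G')$ if and only if $G \cong G'$, so this comparison decides isomorphism correctly.

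The only subtlety worth mentioning is the handling of the case where one of the inputs has elimination distance greater than $k$ to degree $d$: the parameterization implicitly assumes a bound on the parameter, so one may either require the parameter as part of the input, or first verify (again in $\FPT$ time, by running the elimination-distance computation implicit in the construction of the $d$-degree torso and its tree-depth decomposition) that both graphs meet the promised bound. There is no genuine obstacle here; the corollary is essentially bookkeeping on top of the canonisation theorem.
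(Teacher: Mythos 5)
Your proposal is correct and matches the paper's (implicit) reasoning exactly: the corollary is stated without a separate proof precisely because it follows immediately from the canonisation theorem by computing and comparing canonical forms. Nothing further is needed.
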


\section{Conclusion}
\label{S:conclusion}

We introduce a new way of parameterizing graphs by their distance to
triviality, i.e.\ by elimination distance.  In the particular case of
graph canonisation, and thus also graph isomorphism, taking triviality
to mean graphs of bounded degree, we show that the problem is $\FPT$.  

A natural question that arises is what happens when we take other
classes of graphs for which graph isomorphism is known to be tractable
as our ``trivial'' classes.
For instance, what can we say about $\GI$ when
parameterized by elimination distance to planar graphs?  Unfortunately
techniques such as those deployed in the present paper are
unlikely to work in this case.  Our techniques rely on identifying a
canonical subgraph which defines an elimination tree into the trivial
class.  In the case of planar graphs, consider graphs which are
subdivisions of $K_5$, each of which is deletion distance 1 away from
planarity.  However the deletion of \emph{any} vertex yields a planar
graph and it is therefore not possible to identify a canonical such
vertex.

More generally, the notion of elimination distance to triviality seems
to offer promise for defining tractable parameterizations for many
graph problems other than isomorphism.  This is a direction that bears
further investigation.

It is easy to see that if a class of graphs $\C$ is characterised by a
finite set of excluded minors, that the class $\hat\C$ of graphs with bounded
elimination distance to $\C$ is characterised by a finite set of
excluded minors as well. An interesting question is whether we can,
given the set of excluded minors for $\C$, compute the excluded minors
for $\hat\C$ as well?

\bibliographystyle{amsplain}
\bibliography{isomorphism_elim_distance.bib}

\providecommand{\bysame}{\leavevmode\hbox to3em{\hrulefill}\thinspace}
\providecommand{\MR}{\relax\ifhmode\unskip\space\fi MR }
\providecommand{\MRhref}[2]{%
  \href{http://www.ams.org/mathscinet-getitem?mr=#1}{#2}
}
\providecommand{\href}[2]{#2}
\begin{thebibliography}{10}

\bibitem{babai_canonical_1983}
L.~Babai and E.~M. Luks, \emph{{Canonical Labeling of Graphs}}, Proc. 15th ACM
  Symp. Theory of Computing (New York, NY, USA), ACM, 1983, pp.~171--183.

\bibitem{bouland_tractable_2012}
A.~Bouland, A.~Dawar, and E.~Kopczy{\'{n}}ski, \emph{{On Tractable
  Parameterizations of Graph Isomorphism}}, Parameterized and Exact
  Computation, Springer Berlin Heidelberg, 2012, pp.~218--230.

\bibitem{Diestel:2000vm}
R.~Diestel, \emph{{Graph Theory}}, Springer, January 2000.

\bibitem{Downey:2012vk}
R.~G. Downey and M.~R. Fellows, \emph{{Parameterized Complexity}}, Springer
  Verlag, October 2012.

\bibitem{Evdokimov:hy}
S.~Evdokimov and I.~Ponomarenko, \emph{{Isomorphism of coloured graphs with
  slowly increasing multiplicity of Jordan blocks}}, Combinatorica \textbf{19}
  (1999), no.~3, 321--333.

\bibitem{FellowsLMRS08}
M.~R. Fellows, D.~Lokshtanov, N.~Misra, F.~A. Rosamond, and S.~Saurabh,
  \emph{Graph layout problems parameterized by vertex cover}, Proc. 19th Intl.
  Symp. Algorithms and Computation, 2008, pp.~294--305.

\bibitem{Filotti:1980eg}
I.~S. Filotti and J.N. Mayer, \emph{{A polynomial-time algorithm for
  determining the isomorphism of graphs of fixed genus}}, STOC '80: Proceedings
  of the twelfth annual ACM symposium on Theory of computing, ~ACM Request
  Permissions, April 1980.

\bibitem{Flum:2006vj}
J.~Flum and M.~Grohe, \emph{{Parameterized Complexity Theory}}, Springer, May
  2006.

\bibitem{GroheMarx2012}
M.~Grohe and D.~Marx, \emph{Structure theorem and isomorphism test for graphs
  with excluded topological subgraphs}, Proc. 44th Symp. on Theory of
  Computing, 2012, pp.~173--192.

\bibitem{guo_structural_2004}
J.~Guo, F.~H{\"u}ffner, and R.~Niedermeier, \emph{{A Structural View on
  Parameterizing Problems: Distance from Triviality}}, Parameterized and Exact
  Computation, Springer Berlin Heidelberg, 2004, pp.~162--173.

\bibitem{Kratsch:ke}
S.~Kratsch and P.~Schweitzer, \emph{{Isomorphism for graphs of bounded feedback
  vertex set number}}, SWAT'10: Proceedings of the 12th Scandinavian conference
  on Algorithm Theory (Berlin, Heidelberg), ~Springer-Verlag, June 2010,
  pp.~81--92.

\bibitem{lindell_logspace_1992}
S.~Lindell, \emph{{A logspace algorithm for tree canonization (extended
  abstract)}}, STOC '92: Proceedings of the twenty-fourth annual ACM symposium
  on Theory of computing, ~ACM Request Permissions, July 1992.

\bibitem{LokshtanovPPS14}
D.~Lokshtanov, M.~Pilipczuk, M.~Pilipczuk, and S.~Saurabh,
  \emph{Fixed-parameter tractable canonization and isomorphism test for graphs
  of bounded treewidth}, arxiv:1404.0818 [cs.DS], 2014.

\bibitem{luks_isomorphism_1982}
E.~M. Luks, \emph{{Isomorphism of graphs of bounded valence can be tested in
  polynomial time}}, Journal of Computer and System Sciences \textbf{25}
  (1982), no.~1, 42--65.

\bibitem{McKay:1981ug}
B.~D. McKay, \emph{{Practical Graph Isomorphism}}, Congressus Numerantium,
  1981, pp.~45--87.

\bibitem{McKay:2014uw}
B.~D. McKay and A.~Piperno, \emph{{Practical graph isomorphism, II}}, Journal
  of Symbolic Computation \textbf{60} (2014).

\bibitem{Miller:1980iq}
G.~Miller, \emph{{Isomorphism testing for graphs of bounded genus}}, STOC '80:
  Proc. 12th ACM Symp. Theory of Computing, ACM, 1980.

\bibitem{Niedermeier:2006ei}
R.~Niedermeier, \emph{{Invitation to Fixed-Parameter Algorithms}}, Oxford
  University Press, February 2006.

\bibitem{Ponomarenko:1991bm}
I.~N. Ponomarenko, \emph{{The isomorphism problem for classes of graphs closed
  under contraction}}, Journal of Soviet Mathematics \textbf{55} (1991), no.~2,
  1621--1643.

\bibitem{Toda:2006bb}
S.~Toda, \emph{{Computing Automorphism Groups of Chordal Graphs Whose
  Simplicial Components Are of Small Size}}, IEICE - Transactions on
  Information and Systems \textbf{E89-D} (2006), no.~8, 2388--2401.

\bibitem{Uehara:2005cx}
R.~Uehara, S.~Toda, and T.~Nagoya, \emph{{Graph isomorphism completeness for
  chordal bipartite graphs and strongly chordal graphs}}, Discrete Applied
  Mathematics \textbf{145} (2005), no.~3, 479--482.

\bibitem{Yamazaki:dt}
K.~Yamazaki, H.~L. Bodlaender, B.~De~Fluiter, and D.~M. Thilikos,
  \emph{{Isomorphism for graphs of bounded distance width}}, CIAC '97 (Berlin,
  Heidelberg), Springer Berlin Heidelberg, 1997, pp.~276--287.

\end{thebibliography}

\end{document}